\documentclass[reqno]{amsart}

\usepackage[margin=3cm]{geometry}

\def\dxplot{.96}
\usepackage{arXiv_style}

\usepackage{fontawesome}

\makeatletter
\def\blfootnote{\gdef\@thefnmark{}\@footnotetext}
\makeatother

\title[\CODSSI]{COD-\textsc{ssi}: Enforcing Mutual Privacy for Credential Oblivious Disclosure in Self Sovereign Identity}

\author[E.\ Onofri et al.]{}

\begin{document}

\blfootnote{$^{\star}$ Corresponding Author, (\href{mailto:elia.onofri@kaust.edu.sa}{\faEnvelopeO}) \texttt{elia.onofri@kaust.edu.sa}, (\href{https://www.eliaonofri.it}{\faGlobe}) \texttt{www.eliaonofri.it}}

\maketitle

\vspace{-1em}

\begin{center}
    \begin{minipage}{.89\linewidth}\centering
        \textsc{Elia Onofri}$^{\, \textsc{a},\, \star,\, \orcidlink{0000-0001-8391-2563}}$,
        \textsc{Andrea De Salve}$^{\, \textsc{b},\, \orcidlink{0000-0003-1691-7182}}$,
        \textsc{Paolo Mori}$^{\, \textsc{c},\, \orcidlink{0000-0002-6618-0388}}$,\\
        \textsc{Laura Emilia Maria Ricci}$^{\, \textsc{d},\, \orcidlink{0000-0002-8179-8215}}$,
        \textsc{Roberto Di Pietro}$^{\, \textsc{a},\, \orcidlink{0000-0003-1909-0336}}$.
        \\
        \bigskip
        \begin{minipage}{.9\linewidth}\centering
            \footnotesize
            $^\textsc{a}$Computer, Electrical and Mathematical Sciences and Engineering (CEMSE) Division,\\
            King Abdullah University of Science and Technology (KAUST)\\
            Thuwal 23955, Saudi Arabia
        \end{minipage}

        \medskip

        \begin{minipage}{.9\linewidth}\centering
            \footnotesize
            $^\textsc{b}$Institute of Applied Sciences and Intelligent Systems (ISASI),\\
            National Research Council of Italy (CNR),\\
            Lecce 73100, Italy
        \end{minipage}

        \medskip

        \begin{minipage}{.45\linewidth}\centering
            \footnotesize
            $^\textsc{c}$Istituto di Informatica e Telematica (IIT),\\
            National Research Council of Italy (CNR),\\
            Pisa 56127, Italy
        \end{minipage}
        \begin{minipage}{.45\linewidth}\centering
            \footnotesize
            $^\textsc{d}$Department of Computer Science,\\
            University of Pisa,\\
            Pisa 56127, Italy
        \end{minipage}
    \end{minipage}
\end{center}

\medskip
\thispagestyle{empty}

\begin{abstract}

    The Self-Sovereign Identity (SSI) paradigm is instrumental for decentralised identity management, allowing an entity to create, manage, and present their digital credentials without relying on centralised authorities.
    Credential selective disclosure is one of the most attractive privacy-preserving features of SSI, allowing users to reveal only the minimum necessary information from their credentials.
    However, current selective disclosure mechanisms primarily focus on protecting the privacy of credential Holders, while offering limited protection to the Verifiers of credentials.
    Indeed, the specific credential information requested by a Verifier can inadvertently reveal to credential Holders sensitive information, including internal decision-making criteria, business rules, or strategic plans.
    In this work, we address this threat by proposing, to the best of our knowledge, the first approach that enforces mutual privacy in credential exchanges.
    To this end, we introduce \CODSSI (Claim Oblivious Disclosure for SSI), a novel framework that leverages Oblivious Pseudorandom Functions to allow Verifiers to selectively access a subset of claims without revealing which specific claims were accessed to the credential Holder.
    The security of our solution is formally verified and its feasibility is assessed through the experimental evaluation of our open-source prototype implementation.
    These results show that provable mutual privacy in the context of SSI can be achieved with just moderate computational and communication overhead.
    
    \medskip
    
    \noindent{\bf Keywords:}
    Self Sovereign Identity \sep
    Selective disclosure \sep
    Oblivious Key Derivation \sep
    Verifier Privacy.
    
    \smallskip
    
    \noindent{\bf AMS-MSC 2020:} Code1 \sep Code2 \sep Code3.

\end{abstract}

\begin{multicols}{2}
    \small
    \tableofcontents
\end{multicols}

\newpage


\section{Introduction}

The Self Sovereign Identity (\SSI) \cite{tobin2016inevitable} paradigm has been recently introduced to give back to users the control of their identities and of the data paired to them, \ie, the claims describing users' features, such as, \eg, their degree attestations, their vaccination certificates, or their proofs of income.
In fact, in traditional centralised identity models, users' identities and all the related data are managed by centralised entities, called Identity Providers (\textsf{IdPs}), which release such data to third parties.
For instance, when you access a service using Google Login, your name, email address, and profile picture are always released to the service provider~\cite{gLogin}.
Instead, adopting the \SSI model, each user creates their own Decentralised IDentity (\DID) \cite{didw3c}, typically leveraging the blockchain technology, and manages their own wallet where their identity-related data are stored in Verifiable Credentials (\VCs) \cite{vcw3c} form.
\VCs are issued by entities that are trusted for certifying specific features (claims, in the \SSI jargon) of users.
For instance, a University may issue a \VC to one of its students attesting that the cited student has obtained a degree in Computer Science with full marks and honours in a given year.
When interacting with third parties, called verifiers, users can choose from their wallet the subset of their \VCs they want to disclose, in order to keep private the remaining ones.
Since a \VC could contain several claims, selective disclosure techniques \cite{SalveLMR22} can be applied to further increase the granularity of control of the user on their data, allowing them to disclose only a subset of the claims embedded in each \VC. 
For instance, in the previous example, the student could disclose only the kind of degree attestation they earned, \ie, a degree in Computer Science, without disclosing the marks or the date of graduation.

\begin{figure*}[t]
    \centering
    \begin{tikzpicture}[scale=1.2,
        overbrace/.style={ultra thick, decoration={brace, amplitude=15pt}, decorate},
        bracenode/.style={pos=0.5,anchor=north,yshift=1.2cm},
        font=\footnotesize
    ]
        \newcommand*\circled[1]{\scalebox{.8}{\tikz[baseline=(char.base)]{\node[black,shape=circle,draw,fill=greenM!50,inner sep=1.3pt] (char) {#1};}}}

        \node[label={above:\textsc{Issuer}}] (I) at (0, 0) {\includegraphics[width=1cm]{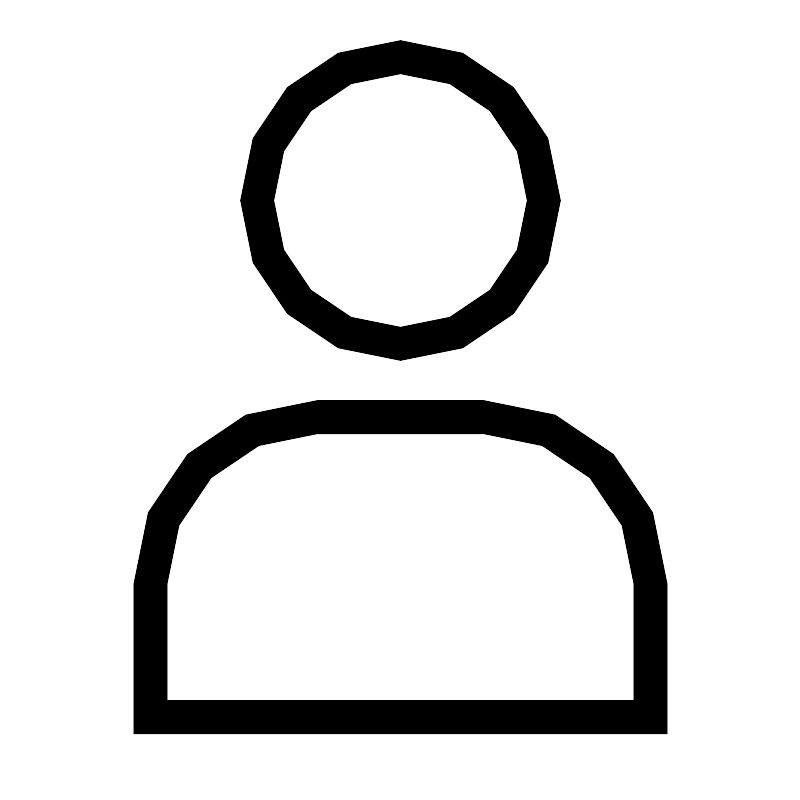}};
        \node[label={above:\textsc{Holder}}] (H) at (5, 0) {\includegraphics[width=1cm]{figs/user-black.png}};
        \node[label={above:\textsc{Verifier}}] (V) at (10, 0) {\includegraphics[width=1cm]{figs/user-black.png}};

        \draw[-latex] (0, -1.0) --node[above]{\VC} (5, -1.0);

        \draw[dashed] (-.7, -1.75) -- (6, -1.75);
        
        \draw[-latex] (5, -2.1) --node[above]{Send \VP} (10, -2.1);

        \draw[dashed] (4, -2.9) -- (11, -2.9);

        \draw[latex-, color=orangeM] (5, -3.65) --node[above=-.1em]{Request claim(s) data} (10, -3.65);
        \draw[-latex, color=orangeM] (5, -4.15) --node[above=-.1em]{Send claim(s) data} (10, -4.15);
        
        \draw[dashed] (4, -5.05) -- (11, -5.05);
        
        \draw[-latex, color=blueM] (5, -5.85) --node[above=-.1em]{Send Encrypted Claims data} (10, -5.85);
        \draw[latex-, color=blueM] (5, -6.35) --node[above=-.1em]{Request (oblivious) claim(s) key} (10, -6.35);
        \draw[-latex, color=blueM] (5, -6.85) --node[above=-.1em]{Send (oblivious) claim(s) key} (10, -6.85);

        \draw[-latex, double] (I) -- +(0, -2.2);
        \draw[-latex, double] (H) -- +(0, -7.35);
        \draw[-latex, double] (V) -- +(0, -8.0);

        \node[draw, fill=white] at (0, -1.0) {Issue \VC};
        \node[draw, fill=white] at (5, -1.4) {Verify \VC};
        \node[draw, fill=white] at (5, -2.1) {Create \VP};
        \node[draw, fill=white, align=center, anchor=east] at (10.3, -2.5) {Verify \VP};

        \node[orangeM, draw, fill=white, anchor=east] (CollClaims) at (5.3, -3.3) {Collect Claims};
        \draw[-latex, orangeM] (3.3, -1.0) |- (CollClaims);

        \node[blueM, draw, fill=white, anchor=east] (EncClaims) at (5.3, -5.5) {Encrypt Claims};
        \draw[-latex, blueM] (2.95, -1.0) --node[above,rotate=-90,black,xshift=-.85cm]{Claim data} (2.95, -5.5) --  (EncClaims);

        \node[draw, color=orangeM, fill=white, align=center, anchor=east] at (10.3, -4.65) {Verify Claim};
        \node[draw, color=blueM, fill=white, align=center, anchor=east] at (10.3, -7.35) {Decrypt \&\\Verify Claim};

        \node at (10.5,-1.9) {\scalebox{3}[7.5]{\}}};
        \node [rotate=90, align=center, anchor=north] at (10.65,-1.9) {Classical\\ Workflow};

        \node [color=orangeM] at (10.5,-3.9) {\scalebox{3}[7.6]{\}}};
        \node [rotate=90, align=center, color=orangeM, anchor=north] at (10.65,-3.9) {Selective\\ claim(s)\\ disclosure};

        \node [color=blueM] at (10.5,-6.3) {\scalebox{3}[9.8]{\}}};
        \node [rotate=90, align=center, color=blueM, anchor=north] at (10.65,-6.2) {Oblivious\\ claim(s)\\ disclosure};

        \node at (1,-1) {\circled{1}};
        \node at (6,-2.1) {\circled{2}};
        \node at (3.125,-1.5) {\circled{3}};
        \node at (9.5,-3.65) {\circled{4}};
        \node at (5.5,-4.15) {\circled{5}};
        \node at (5.5,-5.85) {\circled{6}};
        \node at (9.5,-6.35) {\circled{7}};
        \node at (5.5,-6.85) {\circled{8}};

        \node[black!50, draw, thick, anchor=north west, align=left, text width=4.25cm] at (-1.15, -2.5) {\scriptsize
            After a \VC is issued by the Issuer \circled{1}, the Holder creates a \VP and shares it with the Verifier \circled{2}.\\[.5em]
            In the classical approach (black, top), claims are embedded within the \VC  and hence received in the \VP, otherwise they are provided as separate claim data \circled{3}.\\[.5em]
            In the \emph{selective disclosure} setting (orange, middle), the Verifier explicitly requests \circled{4} and receives \circled{5} the selected plain claims.\\[.5em]
            In the \CODSSI approach (blue, bottom), claims are transmitted\- in encrypted form \circled{6}, and the Verifier engages in an OPRF-based protocol with the Holder to obliviously derive the corresponding decryption keys \circled{7}, \circled{8}.
        };
    \end{tikzpicture}
    \caption{
        End-to-end flow of credential issuance, presentation, and verification.
    }
    \label{fig:issuer-holder-verifier}
\end{figure*}

Summarising, the \SSI paradigm provides full support to users interacting with verifiers to preserve their privacy by disclosing only a subset of claims for a subset of the \VCs they own.
In some scenarios, however, also Verifiers could have their privacy requirements in the process of \SSI credentials exchange and verification, as they might want to keep secret from the subject the set of claims they need to know, which could even change from time to time -- in the following subsection we report a few examples where this feature proves its value. 
It is worth noticing that, to the best of our knowledge, the above highlighted gap is currently not addressed in the literature.

In this regard, this work seeks to overcome the aforementioned limitation by introducing \CODSSI, a Claim Oblivious Disclosure solution for Self Sovereign Identity.
\CODSSI enforces Verifiers' privacy by allowing them to selectively disclose a subset of the users' claims, while ensuring that the users remain unaware of (oblivious to) which specific claims have been actually disclosed.
In a nutshell: a user selects a subset of $N$ claims out of their available \VCs, \ie the claims that the user is willing to disclose, while restricting the maximum number of claims that can be actually revealed to $N_o$, with $N_o<N$.
The Verifier then chooses up to $N_o$ claims to be disclosed among the available ones, without the user knowing which specific claims have been actually disclosed.
Figure~\ref{fig:issuer-holder-verifier} provides the workflow for credential life management, emphasising the role of the selective disclosure and oblivious selective disclosure w.r.t.\ the classical \SSI setting.


\subsection{Motivating Examples} The proposed approach finds application in multiple use cases where Verifiers' privacy is relevant, such as the two described below.
\subsubsection{Financial risk assessment.}
A compelling application of \CODSSI arises in the domain of trade secrets, which are confidential information (such as process, formula, and algorithm) that provide a subject with economic and competitive advantages over competitors, and they can be protected by patent or copyright.
As for example, creditworthiness and loan evaluation provided by financial institutions are required to employ risk assessment models that rely on a limited number of applicant attributes to derive a credit score, such as a standard Employment Status credential and a Financial Status credential issued by the government.
Suppose the Verifier always requests from the Financial Status credential the income and one additional risk indicator: debt-ratio (if the applicant's employment sector is considered stable) or delinquency count (if instead the employment sector is considered high-risk).
If the Holder can observe which claim is requested (debt-ratio vs delinquency count), they can immediately infer the Verifier's internal classification of their employment profile. By slightly modifying the employment credential across sessions, an attacker could reverse-engineer the Verifier's decision logic or scoring thresholds.
More generally, disclosing the exact subset of attributes used in the computation could reveal the structure of the risk model, thereby undermining the confidentiality of proprietary scoring methodologies that institutions are expected to safeguard under prudential and supervisory frameworks (cf.\ CRR~\cite{eu2013crr} or Basel~III and its Pillar~3 disclosure framework~\cite{bcbs2018pillar3}).
Furthermore, GDPR \cite{GDPR} recognises that the right to the protection of personal data must not affect the fundamental rights and freedoms of others, including trade secrets (Recital 4 of the GDPR).
\CODSSI is a solution: the Holder makes available a set of attributes but releases only the ones actually requested by the Verifier for the risk assessment, while the Verifier's choice of which attributes to access remains hidden, enforcing both GDPR-aligned proportionality in personal data handling and further preserving the confidentiality of financial institutions' internal models.


\subsubsection{Medical data access.}
A further application can be identified in the management of access to Electronic Health Records (EHRs) for research purposes. 
Medical data are considered a \emph{special category of personal data} under the GDPR (\cite[Art.~9]{GDPR}), subject to strict requirements of proportionality and purpose limitation, while in the United States comparable obligations are imposed by the Health Insurance Portability and Accountability Act (HIPAA, \cite{HIPAA1996}). 
Researchers often require access to a limited set of medical attributes --for example, the results of specific laboratory tests, diagnostic codes, or prescribed treatments-- in order to validate hypotheses or train predictive models. 
However, disclosing \emph{which} subset of fields is accessed may in itself reveal the focus of the ongoing research, thus risking premature exposure of scientific work and potential misappropriation of intellectual contributions. 
\CODSSI provides a balanced solution: patients or custodians release no more than the attributes effectively queried among the available ones, 
while the researchers' choice of which fields to access remains hidden. 
Hence, the framework ensures compliance with GDPR's \emph{data minimisation principle} and \emph{data protection by design} (\cite[Arts.\ 5(1)(c), 25]{GDPR}), while simultaneously protecting the confidentiality of research agendas, thereby fostering innovation without compromising patient privacy or institutional compliance.


\subsection{Contributions}

This paper provides the following contributions:
\begin{itemize}
    \item We define \CODSSI, the first solution, to the best of our knowledge, to enforce Verifier privacy \SSI credentials exchange, disclosure and verification;
    \item We provide structured proofs demonstrating the security of our approach under an honest-but-curious client and a malicious server model and we discuss the extensibility to a model where both parties are malicious; 
    \item We develop a Proof of Concept (PoC) implementing the proposed approach to show its feasibility;
    \item We conduct a set of experiments on the PoC to assess the introduced overhead, showing that provable mutual privacy in \SSI can be achieved with just moderate computational and communication overhead; and,
    \item We will release the source code for the practitioners and the scientific community to build up on our findings.
\end{itemize}


\subsection{Roadmap}

The remainder of this work is structured as follows.  
Section~\ref{sec:relatedWork} surveys related literature on the core primitives underlying \CODSSI, namely selective disclosure, encryption in identity systems, and oblivious key derivation.  
These primitives are then formalised in Section~\ref{sec:preliminaries}, which lays the foundations for the \CODSSI model presented in Section~\ref{sec:COD-SSI}.
Section~\ref{sec:codssi-security} discusses the security of the construction, also individually analysing the role of malicious internal parties.
Section~\ref{sec:exp} describes our experimental setting and reports performance results, comparing the overhead of our oblivious construction against state-of-the-art (non-oblivious) selective disclosure approaches.
Section~\ref{sec:conclusions} concludes with a summary of contributions and directions for future work.
Appendix~\ref{app:results} provides the experimental results for the \VC creation (shared with classical selective disclosure).
For a detailed treatment of the security analysis of \CODSSI, the reader is referred to Appendix~\ref{app:security}.
A table summarising all the symbols is located in Appendix~\ref{app:nomenclature}.


\section{Related Work}
\label{sec:relatedWork}

Research on privacy-preserving digital identity spans multiple domains, from decentralised identity management to advanced cryptographic techniques for credential protection. 
In this section, we frame our contribution within the existing literature by reviewing prior work on Self-Sovereign Identity (\SSI), selective disclosure mechanisms, and the cryptographic primitives underlying modern identity systems. 
For more comprehensive overviews of SSI frameworks and their evolution, we refer the reader to \cite{manimaran2025decentralization}.
We also refer the interested reader to \cite{pava2024self} and \cite{ferdous2019search}, which highlight the interplay between SSI and distributed ledgers, cryptographic protocols, and privacy-enhancing technologies.

\subsubsection*{Selective Disclosure.}
The W3C identifies selective disclosure as a fundamental privacy requirement for VCs, but does not specify a concrete method for its implementation~\cite{vcw3c}. 
As a result, different approaches have been proposed in the literature, each with distinct security and usability trade-offs.
One of the simplest is the atomic credential approach \cite{vcw3cGuidelines,SalveLMR22}, in which each claim is issued as a separate VC.
Hash-based approaches \cite{ietf-oauth-selective-disclosure-jwt-22,SalveLMR22,halpin2020vision} commit to claim values via hash functions, while Merkle-tree based methods \cite{saito2021lightweight,mukta2020blockchain,tariq2022cerberus,tian2023authenticated} exploit hash-tree data structures to enable efficient proofs of inclusion. 
Cryptographic techniques such as attribute-based encryption \cite{hernandez2018protecting} or threshold cryptography \cite{SonninoABMD19} have also been adopted to enforce fine-grained disclosure control.  
Also, signature schemes such as CL signatures \cite{camenisch2004signature} and BBS+ \cite{tessaro2023revisiting} provide built-in selective disclosure through zero-knowledge proofs.
Finally, regarding Zero-knowledge, zkSNARK-based constructions, such as zk-creds~\cite{Rosenberg2023zkcreds}, were introduced to enable flexible anonymous credentials while introducing the notion of publicly verifiable credentials.
However, unlike our proposed solution, all the above approaches still allow the Holder to know exactly which attributes are disclosed to the Verifier: in zero-knowledge-based selective disclosure the predicates on the claims that must be proved by the Holder reveal the internal decision-making logic of the Verifier, and even anonymous credential systems with hidden policies require the Holder to participate in policy-dependent proof generation.
Also, except for the atomic solution (which lacks flexibility), all the cited approaches introduce communication and computational overheads, forcing all the parties to be compliant with changes in the protocols. Among these, hash-based solutions offer the best trade-off between deployment simplicity, computational efficiency, and claim-level granularity, which explains their wide adoption in current SSI frameworks (\eg, SD-JWT~\cite{ietf-oauth-selective-disclosure-jwt-22}).
Indeed, from the Verifier's perspective, a binary reveal/non-reveal model offers significant advantages in terms of simplicity, low computational overhead, and enhanced privacy protection.

\subsubsection*{Key Derivation and Encryption in Identity Systems.}
Beyond credential structures, the secure management of cryptographic keys is central to SSI deployments.  
Key Derivation Functions (KDFs, \cite{Krawczyk2010}), typically based on HMAC or extract-then-expand constructions such as HKDF \cite{rfc5869}, are widely used to derive unlinkable keys for authentication and encryption in protocols such as TLS~1.3~\cite{TLS13} and Noise~\cite{perrin2018noise}, which are increasingly integrated into decentralised identity frameworks~\cite{Perugini2024}.  

Authenticated encryption schemes with associated data (AEAD, \cite{Rogaway2002}), particularly employing Galois Counter Mode (GCM, \cite{mcgrew2004galois}) such as AES-GCM \cite{mcgrew2004galois}, are the \emph{de facto} standard for protecting data storage and transmission, thanks to their efficiency and hardware support~\cite{TLS13}.
Their adoption in identity frameworks ensures both confidentiality of attributes and binding to auxiliary metadata (\eg, credential identifiers or revocation information), which is a fundamental requirement for selective disclosure mechanisms.
Variants, such as ChaCha20-Poly1305~\cite{rfc8439} or Ascon~\cite{Dobraunig2021}, have also been proposed and adopted in practice to provide performance advantages on constrained devices, a relevant setting for mobile identity wallets.

\subsubsection*{Oblivious Key Derivation.}
While traditional KDFs deterministically derive keys from secret material, recent research explores \emph{oblivious} KDFs, in which a client obtains a derived key without revealing its input to the KDF server, and the server learns nothing about the derived key itself~\cite{Casacuberta2022}. These constructions are closely related to Oblivious Pseudorandom Functions (OPRFs, \cite{NaorReingold97}), and provide strong privacy guarantees: each derived key is computationally independent and unlinkable to other derivations, even when the server observes multiple queries.

A widely used real-world OPRF is the discrete-log variant called 2-Hash Diffie Hellman (2HashDH / HashedDH), originally introduced for threshold password authentication key-exchange (T-PAKE~\cite{JareckiKiayiasKrawczyk14}). 2HashDH is simple yet efficient, achieves optimal round complexity, and provides the strongest Universal Composability (UC) security guarantees under the one-more Diffie-Hellman assumption in the random oracle model~\cite{BeullensDodgsonFallerHesse24}, relying on a blind-exponentiate-unblind approach. 
We refer the interested reader to the SoK by Casacuberta et al. \cite{Casacuberta2022} for an in-depth discussion on the properties and constructions of OPRFs.

Although our proof-of-concept adopts 2HashDH due to its simplicity, the construction is independent of the chosen OPRF. In particular, it is worth mentioning the 2HashOPRF framework~\cite{BeullensDodgsonFallerHesse24}, which overcomes the major limitation of 2HashDH by providing post-quantum security.


\section{Preliminaries}
\label{sec:preliminaries}

In this section, we introduce the main building blocks that underlie our construction, namely decentralised identifiers, verifiable credentials, selective disclosure techniques, key derivation and pseudorandom functions, oblivious pseudorandom functions, and authenticated encryption.

\subsubsection*{Self-Sovereign Identity.}
\label{ssec:ssi}

In \SSI, users can create global unique identifiers called \emph{Decentralised IDentifiers} (\DIDs)~\cite{didw3c}, to interact with other entities.  
Each \DID identifies a \DID \emph{subject} and is associated with a \DID \emph{document} describing the subject properties, including the cryptographic material used to prove control over the \DID and to verify assertions.

Attributes of an entity are granted by Issuers via \emph{Verifiable Credentials} (\VCs)~\cite{vcw3c}.
A credential $\VC = (C_\VC, M_\VC, P_\VC)$ consists of a claim--value dictionary $C_\VC = \{c_i : v_i\}$, metadata $M_\VC$ (Issuer \DID, credential type, validity, etc.), and a proof $P_\VC$ obtained by signing $C_\VC \| M_\VC$ with the Issuer key.  
The Holder stores the VCs locally and can create \emph{Verifiable Presentations} (\VPs) by bundling a subset of \VCs with additional metadata and signing them in a triple $\VP = (\VC[\cdot], M_\VP, P_\VP)$.
Verification of a \VC or \VP involves resolving the Issuer and Holder \DIDs to recover the cryptographic material for signature validation.

\subsubsection*{Selective Disclosure.}

A fundamental privacy property in SSI is \emph{selective disclosure}, which allows the Holder to reveal only a subset of claims while maintaining Issuer integrity guarantees.  
Approaches range from atomic credentials and hash-based commitments to Merkle trees, encryption-based solutions, and signature schemes with built-in disclosure support (see Section~\ref{sec:relatedWork}).  

In this work, we focus on the \emph{hash-based approach}, as it provides a simple, commonly-implemented baseline later adaptable to our oblivious variant\footnote{Other selective-disclosure mechanisms, such as the encryption-based construction, particularly fit to our model. However, their comparatively limited adoption in current \SSI frameworks reduces their immediate deployability.}.
Here, claims $v_i$ in $C_\VC$ are replaced by commitments $x_i = H(v_i\|t_i)$, for some hash function $H : \FF_2^\star \to \XXX$ and some salt $t_i$, creating a new claim-digest dictionary $C_\VC^h$; the Issuer then delivers the actual data (couples $v_i, t_i$) as complementary dictionary $D_\VC$ under a (possibly) separate channel.
Digests are typically generated under well-known schemes, but no specific implementation is typically enforced \cite{rfc9901}: our choice falls on SHA-3~\cite{SHA-3} due to its strong security properties, wide adoption, and versatility.
The idealised pipeline based on selective disclosable \VCs is reported later in Figure~\ref{fig:issuer-holder-verifier} (centre, orange), while the structure of hash-based Selective Disclosable \VCs is presented in Figure~\ref{fig:VC-structure}.
\begin{figure}[t]
    \centering
    \begin{tikzpicture}[
        nome/.style={draw, rectangle, anchor=west, minimum height=1.5em,
  minimum width=5.5cm,fill=yellow!30,text width=5.3cm, align=left},
        node distance=0,
        outer sep=0,
        inner sep=2pt
    ]
        \node[nome, align=center] (N1) at (0, 0) {Verifiable Credential {(\VC)}};
        \node[nome] (N2) [below = of N1] {
            \textbf{Payload} $C_\VC^h$\\[.5em]
            $c_1$ : $x_1 = H(v_1 \| t_1)$\\
            $\dots$\\
            $c_n$ : $x_n = H(v_n \| t_n)$
        };
        \node[nome] (N3) [below = of N2] {
            \textbf{Metadata} $M_\VC$
        };
        \node[nome] (N4) [below = of N3] {
            \textbf{Proof} $P_\VC$\\[.5em]
            $\text{sign}(C_\VC^h \| M_\VC, \sk_\text{issuer})$
        };

        \node[nome, align=center] (C1) at (6, 0) {Credential Data {($D_\VC$)}};
        \node[nome] (C2) [below = of C1] {
            \textbf{Payload} $C_\VC$\\[.5em]
            $c_1$ : $v_1, t_1$\\
            $\dots$\\
            $c_n$ : $v_n, t_n$
        };
    \end{tikzpicture}
    \caption{Hash-based selective disclosure structure: claims are committed with nonce $t_i$ inside the VC as values $x_i$ , while clear values $v_i$ are separately delivered as $D_\VC$.}
    \label{fig:VC-structure}
\end{figure}

\subsubsection*{Key Derivation Functions and Pseudorandom Functions.}
\label{ssec:kdf}

Modern protocols often require deriving independent symmetric keys from shared secrets or long-term master keys.  
A \emph{Key Derivation Function} (KDF) transforms keying material into one or more cryptographically strong outputs.  
Typical goals are entropy stretching and domain separation, with HMAC-based KDF (HKDF)~\cite{rfc5869,Krawczyk2010} being a widely deployed construction.

More formally, a \emph{Pseudorandom Function} (PRF) is a keyed family of functions $F: \KKK \times \XXX \to \YYY$ such that, for a random key $k \in \KKK$, $F_k(\cdot)$ is computationally indistinguishable from a random function.  
PRFs abstract the core of KDFs, MACs, and other symmetric-key primitives.

\subsubsection*{Oblivious Pseudorandom Functions.}
\label{ssec:oprf}

An \emph{Oblivious PRF} (OPRF)~\cite{NaorReingold97,Freedman2005} extends the concept of PRF by distributing the evaluation between a client (holding $x \in \XXX$) and a server (holding $k \in \KKK$).  
The client obtains $F_k(x)$ , while the server learns nothing about $x$ and the client learns nothing about $k$.  
To satisfy security objectives, one needs to evaluate correctness along with client and server privacy.

Among the various instantiations, a particularly practical and widely studied candidate is 2HashDH, which we adopt as the core OPRF primitive in our construction.
In 2HashDH, the client blinds the (hash of) its input with a random group element before sending it to the server, which exponentiates under its secret key.  
After unblinding, the client obtains a group element that is independent of the server’s secret, and applies an independent hash function to derive the final pseudorandom output.  
This double hashing step is crucial to prove the UC security of the protocol, preventing linkage across sessions and ensuring pseudorandomness even under adaptive queries. Figure \ref{fig:2HashDH} shows the steps performed by the client and server in the 2HashDH approach.
Namely, given a value $x \in \XXX$, a key $k \in \KKK$, a group $G$, and two hash functions $H_1 : \XXX \to G, H_2: \XXX \times G \to \YYY$:
\begin{enumerate}
    \item The Client randomly chooses $r \in G$ and evaluates and sends $a = H_1(x)^r$;
    \item The server evaluates and sends $b = a^k$;
    \item The Client unblinds $b$ as $c = b^{\sfrac1r}$ and produces the output $y=H_2(x, c)$.
\end{enumerate}
Security relies on the one-more Diffie–Hellman assumption in the Random Oracle Model (ROM), yielding optimal round complexity and strong composable guarantees.  
As we will adopt 2HashDH in our construction, its simplicity and efficiency make it a natural baseline, while post-quantum variants such as 2HashOPRF~\cite{BeullensDodgsonFallerHesse24} demonstrate how the approach can be extended with lattice-based oblivious transfer to achieve long-term security.

\begin{figure}[t]
    \centering

    \input{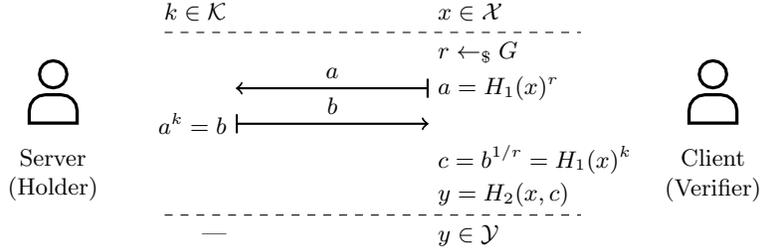}
    
    \caption{The 2HashDH approach for OPRF evaluation.}
    \label{fig:2HashDH}
\end{figure}

\subsubsection*{Authenticated Encryption with AES-GCM.}
\label{ssec:aes-gcm}

In our constructions, keys derived from KDFs or OPRFs are ultimately used within a symmetric encryption scheme that ensures both confidentiality and authenticity.  
The predominant choice in practice is the \emph{Advanced Encryption Standard in Galois/Counter Mode} (AES-GCM)~\cite{mcgrew2004galois}, widely adopted in TLS~1.3 and IPsec for its efficiency, standardisation, and hardware support.

AES-GCM combines AES in counter mode for parallelisable encryption with a polynomial-based authentication function (GHASH) over $\FF_{2^{128}}$.  
Formally, for a secret key $K$, nonce $\IV$, message $M$, and associated data $A$, the encryption is defined as $\Enc_K(\IV,M,A) = (C,T)$, where $C$ is the ciphertext and $T \in \FF_{2^{128}}$ is a 128-bit authentication tag.
Decryption follows naturally as $\Dec_K(\IV, C, T, A) = (M \lor \bot)$, where $T$ is checked against the GHASH of $A$, $C$, $\IV$ under key $K$ prior to the actual ciphertext decryption.

Security of AEAD schemes such as AES-GCM is captured by two complementary notions: \textit{i)} the \emph{IND-CPA confidentiality}, ensuring ciphertext indistinguishability under chosen-plaintext attacks, and \textit{ii)} the \emph{INT-CTXT integrity}, guaranteeing unforgeability of ciphertext--tag pairs.
AES-GCM achieves both properties under standard assumptions, provided nonces are never reused~\cite{mcgrew2004galois}. Its efficiency, compactness, and widespread deployment make it the \textit{de facto} AEAD primitive in practice.


\section{The \CODSSI Model}
\label{sec:COD-SSI}

We are now ready to introduce \CODSSI, our novel approach allowing Verifiers to collect selectively disclosable claims in an oblivious framework.
The methodology builds upon the three classes of primitives introduced in Section~\ref{sec:preliminaries}, enabling a Verifier to selectively extract a given number of claims from a \VP in a privacy-preserving way, while ensuring that all the undisclosed ones remain cryptographically protected and unlinkable.

Typically, selective disclosure is initiated by the Holder as a result of a specific service request. The expected interaction proceeds as follows:
\begin{enumerate}
    \item the Holder selects the set of $N$ claims he is willing to expose within the context of that request;
    \item the Verifier declares that access requires $N_o$ claims from this admissible set, without revealing which ones;
    \item the Holder prepares a \VP in which each claim value is encrypted under a key derived via the OPRF from its hash commitment; and
    \item during verification, the Verifier engages in the OPRF protocol to recover only the $N_o$ keys corresponding to its selected claims, while the Holder learns nothing beyond the number $N_o$ of claims queried.
\end{enumerate}
This design ensures that the Holder retains full control over which claims are eligible for verification, while the Verifier's specific selection remains hidden.
\CODSSI is designed with two complementary objectives.
Given a \VP:  
\begin{itemize}
  \item \textbf{Selective disclosure with obliviousness.}  
  A Verifier can decide which claims (typically $N_o$ out of $N$, $N_o \leq N$) to reveal, while their choice remains unknown to the Holder, who only learns the number of claims queried.  
  \item \textbf{Oblivious key derivation.}  
  Each claim is protected by a key derived through an OPRF, ensuring that even a malicious Verifier cannot learn or correlate non-disclosed claims, and that ciphertexts are indistinguishable from encryptions under fresh random keys.  
\end{itemize}

In this way, our approach advances the state of selective disclosure systems by integrating oblivious primitives into the core cryptographic design.  
Compared to existing hash-based or encryption-based disclosure methods, \CODSSI guarantees that {every ciphertext is independently protected by a hidden, unlinkable key}, while still allowing for efficient verification of the disclosed subset.
It should be noted that Verifier privacy cannot be achieved simply by requesting a superset of claims larger than the $N_o$ actually required: while this may hide the true subset, it forces disclosure of $N^\star \gg N_o$ claims, thereby violating the data-minimisation principles that are central to SSI~\cite{GDPR}.
\CODSSI avoids this trade-off by ensuring that the Verifier accesses exactly the $N_o$ claims it needs, while the Holder cannot determine which $N_o$ out of $N$ were selected.
Moreover, in regulated contexts, the parameters $N$ and $N_o$ can themselves be publicly described and verified by a supervisory authority (\eg, a Data Protection Authority under the GDPR).
For instance, a credit scoring company may keep its scoring function secret while certifying, via regulatory audit, that exactly $N_o$ claims are needed for the evaluation.
\CODSSI enforces this minimisation constraint at the protocol level, without requiring the Verifier to reveal its internal policy, thus bridging the gap between regulatory transparency and business confidentiality.

We next formalise the parties involved in our \CODSSI, we describe its design, and we specify the data structures and algorithms implementing it.

\subsection{Main parties}
\label{ssec:parties}

Our system follows the standard \SSI roles: an \emph{Issuer} who issues Verifiable Credentials (\VCs), a \emph{Holder} who stores \VCs and prepares Verifiable Presentations (\VPs), and a \emph{Verifier} who requests evidence about selected claims.  
Crucially, the oblivious selective-disclosure mechanism introduced here is executed \emph{exclusively} between the Holder and the Verifier at presentation time (\ie the Issuer is not informed, nor does it participate in the OPRF-based key derivation or in the selective reveal protocol).
Consequently, the Issuer's role is limited to standard credential issuance and signature binding to the credential commitments $C_\VC^h$; all subsequent oblivious interactions are Holder--Verifier only.  
This design choice preserves the Issuer's simplicity and avoids requiring any online involvement of credential issuers at verification time.

In detail, the parties involved in \CODSSI are:
\begin{itemize}
  \item the Issuer $\IIII$, who issues a credential $\VC=(C_\VC^h,M_\VC,P_\VC)$ with corresponding data $D_\VC$ to a Holder $\HHHH$ in an offline issuance phase;
  \item the Holder $\HHHH$, who stores the credential(s), prepares a Verifiable Presentation \VP with corresponding presentation data $D_\VP$ containing encrypted claims $E_\VC[\cdot]$ and who acts as the OPRF \emph{server} during verification;
  \item the Verifier $\VVVV$, who inspects the presentation metadata and acts as the OPRF \emph{client} to obtain the claim-specific keys for the claims she wishes to learn.
\end{itemize}
We assume that the Holder generates and maintains an OPRF secret key $\msk_\VP \in \KKK$ (or otherwise controls the server-side secret material for the OPRF instance associated with the presentation), and that the Verifier has access to the Holder to run the OPRF protocol.  
The encrypted claim blobs $E_\VC$ are transported as part of $(\VP,D_\VP)$ prior to any OPRF interaction.

\subsection{Creation of the Verifiable Presentation}
\label{ssec:VP-creation}

The core idea of \CODSSI is to create a set of encrypted claims $E_\VC$ that can be safely shared with a Verifier during \VP presentation.
Upon generating a \VP master key $\msk_\VP$, the Holder can use a KDF to derive, for each claim $v_i$, a claim key $k_i$ from some claim-related identifier, like the hash $x_i = H(v_i \| t_i)$ in $C_\VC^h$, which is already present in the \VC itself.
This way, confidentiality reduces to the problem of obliviously sharing claim keys, ensuring that undisclosed claims remain unlinkable and indistinguishable from random encryptions.

More formally, the Holder wishing to prepare a presentation collects the selective disclosable \VCs in a \VP,  as usual.
However, conversely to the selective disclosable case, she generates a master secret key $\msk_\VP \in \KKK$ for the OPRF and, for each credential $\VC = (C_\VC^h, M_\VC, P_\VC, D_\VC)$, proceeds as follows:
\begin{enumerate}
    \item \textbf{Key derivation.}
    For each claim $\{c_i : v_i\}$, the Holder retrieves $x_i = H(v_i \| t_i)$ from $C_\VC^h$.
    The corresponding claim key is derived as
    \[
        k_i = F_{\msk_\VP}(x_i)
         \,,
    \]
    where $F$ denotes the OPRF. Hence, under 2HashDH (cf.\ Section~\ref{ssec:oprf}), we have
    \[
        k_i = H_2(x_i, H_1(x_i)^{\msk_\VP}) \,.
    \]

    \item \textbf{Encryption.}  
    A fresh nonce $\IV_i$ is sampled, and the randomised claim $m_i = v_i \| t_i$ is encrypted as
    \[
        (y_i, u_i) = \Enc_{k_i}(\IV_i, m_i, x_i) \,,
    \]
    where $\Enc$ is an AEAD scheme (AES-GCM in our proof of concept, cf.\ Section~\ref{ssec:aes-gcm}), and $x_i$ is included as associated data.
    
    \item \textbf{Packaging.}  
    The Holder collects the triples $(\IV_i, y_i, u_i)$ in the encrypted set
    \[
        E_\VC = \{\, (x_i : (\IV_i, y_i, u_i)) \,\}_i \,,
    \]
    which is then embedded in the presentation data $D_\VP$.
\end{enumerate}
The final verifiable presentation thus consists of the usual selective-disclosure structure, augmented with the encrypted presentation data $E_\VC[\cdot]$ (see Figure~\ref{fig:VP-structure}).
The process is also summarised in Algorithm~\ref{alg:create-vp}.

\begin{figure}[t]
    \centering
    \begin{tikzpicture}[
        nome/.style={draw, rectangle, anchor=west, minimum height=1.5em,
  minimum width=5.5cm,fill=yellow!30,text width=5.3cm, align=left},
        node distance=0,
        outer sep=0,
        inner sep=2pt
    ]
        \node[nome, align=center] (N1) at (0, 0) {Verifiable Presentation {(\VP)}};
        \node[nome] (N2) [below = of N1] {
            \textbf{Payload} $\VC[\cdot]$\\[.5em]
            1 : $\VC_1 = (C^h_{\VC_1}, M_{\VC_1}, P_{\VC_1})$\\
            $\dots$\\
            $m$ : $\VC_m = (C^h_{\VC_m}, M_{\VC_m}, P_{\VC_m})$
        };
        \node[nome] (N3) [below = of N2] {
            \textbf{Metadata} $M_\VP$
        };
        \node[nome] (N4) [below = of N3] {
            \textbf{Proof} $P_\VP$\\[.5em]
            $\text{sign}(\VC[\cdot] \| M_\VP, \sk_\text{holder})$
        };

        \node[nome, align=center] (A1) at (6, 0) {Presentation Data {($D_\VP$)}};
        \node[nome] (A2) [below = of A1] {
            \textbf{Payload} $E_\VC[\cdot]$\\[.5em]
            $1$ : $E_{\VC_1}$\\
            $\dots$\\
            $m$ : $E_{\VC_m}$
        };

        \node[nome, align=center] (C1) at (6, -2.5) {Encrypted Credential Data {($E_\VC$)}};
        \node[nome] (C2) [below = of C1] {
            $c_1$ : $(\IV_1, \Enc_{k_1}(\IV_1, v_1 \| t_1, x_1))$\\
            $\dots$\\
            $c_n$ : $(\IV_n, \Enc_{k_n}(\IV_n, v_n \| t_n, x_n))$
        };
    \end{tikzpicture}
    \caption{Structure of the standard selective disclosable Verifiable presentation (\VP) on the left and of our encrypted presentation data on the top right.}
    \label{fig:VP-structure}
\end{figure}

\begin{algorithm}[t]
    \caption{Creation of a \VP}
    \label{alg:create-vp}
    \KwInput{A set of selective-disclosable credentials $\{\VC_j =$ $(C^h_{\VC_j}, M_{\VC_j}, P_{\VC_j}, D_{\VC_j})\}_{j=0}^{m}$ to include in the \VP.}
    \KwOutput{A Verifiable Presentation $\VP = (\VC[\cdot], M_\VP, P_\VP)$ and encrypted material $D_\VP = (E_\VC[\cdot])$.}
    \BlankLine

    \tcp{1. Initialise the presentation}
    Generate a fresh master secret key $\msk_\VP \leftarrow_\$ \KKK$\;
    Initialise $D_\VP \gets \emptyset$\;

    \BlankLine
    \tcp{2. Process credential}
    \ForEach{$j=0, \dots, m$}{
        
        Initialise $E_{\VC_j} \gets \emptyset$\;

        \BlankLine
        \ForEach{$i = 0, \dots, n_j$}{
            \tcp{2a. Key derivation for each claim}
            Retrieve $c_i, v_i, t_i$ from $D_{\VC_j}$\;
            Retrieve $x_i = C^h_{\VC_j}[c_i]$\;
            Compute claim key $k_i = H_2(x_i, H_1(x_i)^{\msk_\VP})$\;

            \BlankLine
            \tcp{2b. Encrypt the randomised claim}
            Let $m_i = v_i \| t_i$\;
            Sample fresh nonce $\IV_i$\;
            Compute $(y_i, u_i) = \Enc_{k_i}(\IV_i, m_i, x_i)$\;

            \BlankLine
            \tcp{2c. Package encrypted material}
            Add entry $(c_i : (\IV_i, y_i, u_i))$ to $E_{\VC_j}$\;
        }

        \BlankLine
        Add $E_\VC$ to $D_\VP$\;
    }

    \BlankLine
    \tcp{3. Finalise VP metadata and proofs}
    Construct presentation metadata $M_\VP$\;
    Sign presentation proof $P_\VP$\;

    \BlankLine
    \Return $\VP = (\VC[\cdot], M_\VP, P_\VP)$ and $D_\VP = (E_\VC[\cdot])$\;

\end{algorithm}

\subsection{Verification of the Verifiable Presentation}
\label{ssec:VP-verification}

Upon receiving a verifiable presentation $(\VP, D_\VP)$, the Verifier validates the \VP and the corresponding \VCs proofs by checking that the \VCs or \VP conforms to the specification, that the Holder's signature on the VP is valid, that the Holder is the subject of the \VCs, and whether they are expired or revoked. The Verifier then inspects the metadata, and checks the paths of the included \VCs, as in the selective disclosure framework.
In contrast to the standard case, the Verifier does not request plaintext claims directly.  
Instead, for each claim of interest, she engages in the OPRF evaluation with the Holder, thereby obtaining the corresponding decryption keys $k_i$ without revealing which claims were selected.  
This guarantees that the Verifier's choice remains oblivious, while the Holder retains the ability to enforce disclosure policies, such as limiting the number of claims that can be revealed.

Depending on the Holder's disclosure policy, the OPRF evaluation may be performed in different modes, like, \eg:
in a \emph{Batch mode}, where the Verifier requests all desired claim keys in a single interaction, minimising round complexity; or in \emph{Adaptive mode}, where the Verifier engages in multiple rounds, choosing further claims sequentially based on the results of previously disclosed ones.
Both modes preserve obliviousness, but differ in flexibility and control: adaptive disclosure allows incremental negotiation, while batch disclosure improves efficiency.  
Either way, the Holder may enforce policies such as halting the interaction after a disclosure quota is reached (\emph{quota exhaustion}).

The overall flow of \CODSSI, w.r.t.\ the classical and the selective disclosable framework, is summarised in Figure~\ref{fig:issuer-holder-verifier} (bottom, blue). The algorithmic solutions for adaptive and batch modes are presented in Algorithms~\ref{alg:verify-vp-oprf} and~\ref{alg:verify-vp-oprf-batch} respectively.

\begin{algorithm}[t]
    \caption{Verification of a \VP (Adaptive)}
    \label{alg:verify-vp-oprf}
    \KwInput{Verifiable Presentation $\VP = (\VC[\cdot], M_\VP, P_\VP)$, encrypted material $D_\VP = (E_\VC[\cdot])$, and quota $N_o$.}
    \KwOutput{Decrypted claims of interest.}
    \BlankLine
    
    \tcp{1. Check consistency}
    Verifier validates $P_\VP$\;
    \ForEach{$\VC = (C^h_\VC, M_\VC, P_\VC)$ in $\VC[\cdot]$}{
        Verifier validates $P_\VC$\;
    }

    \BlankLine
    \tcp{2. Execute adaptive interaction with Holder}
    \ForEach{$j=0, \dots, N_o-1$}{
        Select $VC_i$ from \VP\;
        Select a claim $c$ in $\VC_i$\;
        Retrieve claim hash $x = C^h_{\VC_i}[c]$\;
        Generate $r \leftarrow_\$ G$\;
        Send $a = H_1(x)^r$ to Holder\;
        Receive $b = a^{\msk_\VP}$ from Holder\;
        Evaluate the claim key $k = H_2(x, b^{1/r})$\;
        Retrieve $(\IV, y, u) = E_{\VC_i}[c]$\;
        Check correctness of tag $u$\;
        Decrypt $m_j$ from $y$ under key $k$\;
        Verify $H(m_j) = x$\;
    }

    \BlankLine
    \Return $\{m_{j}\}_{j=0}^{N_o}$\;
    
\end{algorithm}

\begin{algorithm}[t]
    \caption{Verification of a \VP (Batch)}
    \label{alg:verify-vp-oprf-batch}
    \KwInput{Verifiable Presentation $\VP = (\VC[\cdot], M_\VP, P_\VP)$, encrypted material $D_\VP = (E_\VC[\cdot])$, and quota $N_o$.}
    \KwOutput{Decrypted claims of interest.}
    \BlankLine

    \tcp{1. Check consistency}
    Verifier validates $P_\VP$\;
    \ForEach{$\VC = (C^h_\VC, M_\VC, P_\VC)$ in $\VC[\cdot]$}{
        Verifier validates $P_\VC$\;
    }

    \BlankLine
    \tcp{2. Prepare Batch request}
    Select a set of claims $S = \{(c_j, i_j)| c_j \in C^h_{VC_{i_j}}\}_{j=0}^{N_o}$\; 
    \ForEach{$(c, i) \in S$}{
        Retrieve claim hash $x_j = C^h_{\VC_i}[c]$\;
        Sample fresh blinding exponent $r_j \leftarrow_\$ G$\;
        Compute $a_j = H_1(x_j)^{r_j}$\;
    }

    \BlankLine
    \tcp{3. Execute batch interaction with Holder}
    Send $\{a_j\}_{j=0}^{N_o}$ to Holder\;
    Receive $\{b_j\}_{j=0}^{N_o}$ from Holder, where $b_{j} = a_{j}^{\msk_\VP}$\;

    \BlankLine
    \tcp{4. Derive key \& decrypt}
    \ForEach{$j=0, \dots, N_o-1$}{
        Compute claim key $k_j = H_2(x_j, b_j^{1/r_j})$\;
        Retrieve $(\IV_j, y_j, u_j) = E_{\VC_i}[c_j]$, $i=i_j$\;
        Check correctness of tag $u_j$\;
        Decrypt $m_j$ from $y_j$ under key $k_j$\;
        Verify $H(m_j) = x_j$\;
    }

    \BlankLine
    \Return $\{m_{j}\}_{j=0}^{N_o}$\;

\end{algorithm}

\subsection{Asymptotic Complexity}
\label{ssec:complexity}

Table~\ref{tab:complexity} summarises the asymptotic computational and communication costs of \CODSSI.
We recall that $N$ represents the total number of claims in the \VP, $N_o \leq N$ the number of claims disclosed during a session, and $m$ the number of \VCs bundled in the \VP.
The dominant cost in all phases is given by the scalar multiplication in the elliptic-curve group $G$ (denoted $\mathsf{Exp}$).
Hash-to-group evaluations ($H_1$), standard hashes ($H$, $H_2$), and symmetric cryptographic operations (AEAD encryption/decryption) are significantly cheaper and are reported separately.

\begin{table}[t]
\centering
\caption{Asymptotic costs of \CODSSI per phase and party.}
\label{tab:complexity}
\setlength{\tabcolsep}{5.2pt}
\renewcommand{\arraystretch}{1.3}
\begin{tabular}{llccc}
    \toprule
    \textbf{Phase} & \textbf{Party} & \textbf{Group ops} & \textbf{Sym.\ ops} & \textbf{Comm.}\\
    \midrule
    VP creation     & Holder   & $N\,(\mathsf{Exp} + H_\textsc{g})$  & $N (\mathsf{Enc} + H_2)$  & ---\\
    VP validation   & Verifier & $(m{+}1)\,\mathsf{SigV}$      & ---                          & ---\\
    \midrule
    OPRF discl. & Verifier & $N_o\,(2\,\mathsf{Exp} + H_\textsc{g})$ & $N_o (\mathsf{Dec} + H + H_2)$ & $N_o|G|$\\
                    & Holder   & $N_o\,\mathsf{Exp}$             & ---                          & $N_o|G|$\\
    \midrule
    Storage ($D_\VP$) & ---    & ---                   & ---                          & $N\!\cdot\!(|\IV|{+}|\textsf{ct}|{+}|\textsf{tag}|)$\\
    \bottomrule
\end{tabular}

\medskip

$\mathsf{Exp}$: scalar multiplication in $G$; $H_1$: hash-to-group; $H$, $H_2$: standard hashes;\\ $\mathsf{Enc}, \mathsf{Dec}$: AEAD operations; $\mathsf{SigV}$: signature verification.
\end{table}

Several observations follow.
First, \VP creation scales linearly in $N$: for each claim the Holder performs one hash-to-group evaluation $H_1(x_i)$, one scalar multiplication $H_1(x_i)^{\msk_\VP}$, one hash $H_2$ for key finalisation, and one AEAD encryption.
Second, the disclosure cost depends on $N_o$, not $N$ (but for the signature verification): since $N_o \ll N$ in typical scenarios, the Verifier and Holder engage in a small number of OPRF interactions regardless of how many claims the \VP contains.
Third, the Verifier's per-claim cost ($2\,\mathsf{Exp}$, for blinding and unblinding) is roughly twice the Holder's ($1\,\mathsf{Exp}$, for the server-side exponentiation).
Finally, batch and adaptive modes yield the same total computation and communication; they differ only in round complexity ($1$ round vs.\ $N_o$ rounds).

\section{\CODSSI security}
\label{sec:codssi-security}

This section analyses the security of our construction under the standard compositional methodology: since the employed primitives are proven secure in well-established frameworks, the overall scheme inherits these properties under black-box composition.

We first formalise the adversarial setting and the threat model, then present the security goals that our verifiable presentation protocol with oblivious selective disclosure aims to achieve.
We subsequently state and proof-sketch the main security theorem, derived using standard assumptions on the underlying primitives (a formal analysis is provided in Appendix~\ref{app:security}).
Finally, we discuss the limitations of the current construction when internal parties become malicious, with particular attention to the vulnerability arising from a malicious Holder; we explain its implications and outline possible future directions to overcome it.

\subsection{Threat Model and Adversarial Setting}

As previously described in Section~\ref{ssec:parties}, our protocol involves three internal roles: the \emph{Issuer}, the \emph{Holder}, and the \emph{Verifier}.
Unless stated otherwise, all these parties are assumed to be \emph{honest-but-curious} (HBC): they follow the protocol specification exactly, but may attempt to learn additional information from their allocated views.
This baseline model captures the privacy guarantees that the system provides under compliant participation, which is appropriate for a first-step analysis of selective disclosure mechanisms with oblivious key retrieval.

In addition to internal HBC agents, we assume a malicious \emph{external} adversary $\mathcal{A}$ controlling the communication network.
The adversary can intercept, delay, drop, reorder, and inject messages arbitrarily, and may attempt to compromise confidentiality or integrity via any feasible network-level attack.
Cryptographic security must therefore be achieved despite complete network control: confidentiality and integrity of encrypted claims rely on the AEAD guarantees, and the privacy and correctness of key derivation rely on the OPRF security.
We do not assume confidentiality or integrity of the transport channel.
However, we assume that parties can authenticate each other and bind protocol messages to the relevant session and presentation instance (\eg relying on the on-chain public keys provided by their \DIDs), so that $\mathcal{A}$ cannot successfully impersonate a Holder or Verifier or mount replay attacks beyond trivial denial-of-service (see the operational assumptions below for details).

This threat model allows us to isolate the intrinsic guarantees of the protocol design.
The case where internal parties are malicious is considered separately in Section~\ref{ssec:malicious-internal}, where we discuss the main limitations that emerge once the HBC assumption is relaxed.

\paragraph{Operational assumptions.}
In addition to the cryptographic assumptions stated in Theorem~\ref{thm:security}, we rely on the following operational assumptions that are necessary for the protocol to achieve its security goals in practice:
\begin{enumerate}
    \item \textbf{Fresh presentation key.}
    The Holder generates a fresh OPRF master key $\msk_\VP$ for each new \VP instance.
    This ensures cross-session unlinkability: since the key material is independent across presentations, a Verifier cannot correlate OPRF outputs across distinct sessions, even when querying the same claim index.
    The lifecycle of $\msk_\VP$ is thus bounded to a single \VP: it is generated during \VP creation (Algorithm~\ref{alg:create-vp}), used during the subsequent disclosure interaction, and discarded thereafter.

    \item \textbf{Session binding and replay protection.}
    Each OPRF interaction is bound to a unique session identifier derived from the \VP instance (\eg, combining the \VP nonce, the Verifier's \DID, and a freshly exchanged nonce).
    This prevents replay attacks, where an adversary re-submits a recorded OPRF response to a different session, as well as cross-session splicing, where messages from one protocol run are injected into another.
    Concretely, standard mutual authentication via \DID-bound public keys, combined with a session-specific challenge--response (\eg, as in TLS~1.3~\cite{TLS13} or DID-based protocols~\cite{Perugini2024}), is sufficient to realise this binding.

    \item \textbf{Holder-initiated interaction.}
    We assume that the disclosure protocol is always initiated by the Holder as a consequence of a service request (cf.\ the workflow in Section~\ref{sec:COD-SSI}).
    This means that the Holder never responds to unsolicited verification requests.
    As a consequence, the plausible concern of parallel-session attacks in which multiple Verifiers simultaneously extract claims is mitigated at the application level: the Holder controls the scheduling of sessions and can enforce a quota on the total number of OPRF evaluations across concurrent sessions, \eg, via a shared counter protected by standard concurrency-control mechanisms.

    \item \textbf{Nonce uniqueness.}
    The AEAD scheme (AES-GCM) requires that no nonce $\IV_i$ is ever reused under the same key $k_i$.
    In our construction, nonces are sampled uniformly at random per claim during \VP creation (cf.\ Algorithm~\ref{alg:create-vp}); since $\msk_\VP$ is fresh per presentation, the risk of nonce collision is bounded by the birthday bound on the nonce space and is negligible in practice.
\end{enumerate}

\noindent
On top of such operational assumptions, it is also worth noting two practical leakage vectors that are endemic to our construction and that we decouple from the cryptographic guarantees:
\begin{itemize}
    \item \emph{Policy leakage.} Any Holder-enforced policy that produces observable signals (\eg, quota exhaustion, throttling, or session termination) inevitably leaks coarse information about the disclosure state. Our obliviousness claim (see below Definition~\ref{def:security}, property~3b) is therefore stated \emph{modulo} such policy-observable outputs, which are considered legitimate and known \textit{a priori}.
    \item \emph{Holder compromise.} If an adversary extracts $\msk_\VP$, confidentiality is lost for all claims in the affected presentation. This is an explicit failure mode of the threat model. The fresh-key assumption (item~1 above) bounds the damage to a single \VP instance; further mitigation can be achieved through standard operational practices such as hardware security modules or short-lived key material.
\end{itemize}

\subsection{Security statements}

We now formalise the security guarantees achieved by our construction and we provide a proof sketch.

\begin{definition}[Security of Verifiable Presentations with Oblivious Selective Disclosure]
\label{def:security}
Let $(\VP, D_\VP)$ be a verifiable presentation generated as described in Algorithm~\ref{alg:create-vp}, containing $N$ encrypted claims $\{(x_i, \IV_i, y_i, u_i)\}_{i=1}^{N}$.
Our protocol for \VP verification with oblivious selective disclosure is said to be \emph{secure} if it achieves the following properties against Probabilistic Polynomial Time (PPT) adversaries:
\begin{enumerate}
    \item \textbf{Confidentiality:} for any PPT adversary $\AAA_\VVVV$ that does not obtain the claim key $k_j$ --for some index $j$-- the ciphertext triple $(\IV_j, y_j, u_j)$ is computationally indistinguishable from an AEAD encryption of any equal-length message under a fresh uniform key.
    (Reduces to AEAD IND-CPA security and OPRF pseudorandomness; see also property~3a below.)

    \item \textbf{Authenticity and integrity:} any modification of $(\IV_i, y_i, u_i)$ performed by a PPT adversary is detected by the Verifier with overwhelming probability.
    (Reduces to AEAD INT-CTXT security.)

    \item \textbf{Obliviousness of disclosure (\emph{dual guarantee}):}
    \begin{enumerate}
        \item[\emph{(a)}] \emph{Client privacy:} a PPT Verifier (OPRF client) learns $F_{\msk_\VP}(x_i)$ only for the inputs $x_i$ on which she explicitly queries the OPRF; for all other indices, the OPRF outputs are computationally indistinguishable from independent uniform values.
        (Reduces to OPRF pseudorandomness.)
        \item[\emph{(b)}] \emph{Server privacy:} a PPT Holder (OPRF server) gains no information on which subset of claim keys has been retrieved, beyond what is implied by policy-observable events, like, \eg, quota exhaustion.
        (Reduces to OPRF client-privacy.)
    \end{enumerate}
    \item \textbf{Issuer verifiability:} the validity of each claim remains cryptographically verifiable against the Issuer's signature in the corresponding $\VC$, as in the selective disclosure framework.
\end{enumerate}
\end{definition}

\begin{theorem}[Security of \CODSSI]
\label{thm:security}
Let us assume that:
\begin{enumerate}
    \item the underlying OPRF protocol (2HashDH) is UC-secure against malicious adversaries in the random oracle model (ROM);
    \item the symmetric encryption protocol \Enc (AES-GCM) provides IND-CPA confidentiality and INT-CTXT authenticity under standard assumptions; and,
    \item the selective disclosure mechanism for \VCs (SHA-3) is secure in the ROM.
\end{enumerate}
Then, under the operational assumptions stated in Section~\ref{sec:codssi-security}, the proposed verifiable presentation with oblivious selective disclosure, as described in Sections~\ref{ssec:VP-creation} and \ref{ssec:VP-verification}, satisfies Definition~\ref{def:security} against any PPT malicious external adversary and compliant (HBC) internal parties.
\end{theorem}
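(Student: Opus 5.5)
The plan is a hybrid, compositional argument that treats each property of Definition~\ref{def:security} separately. Each property is reduced to one of the three assumptions of Theorem~\ref{thm:security}, with the operational assumptions used to rule out cross-session effects. First replace the real 2HashDH execution with the ideal OPRF functionality $\mathcal{F}_{\mathrm{OPRF}}$; UC security lets us do this in every session, since each \VP uses a fresh $\msk_\VP$ and session binding prevents splicing. In the ideal world the Verifier obtains $F_{\msk_\VP}(x)$ only for inputs she submits, and only up to the number of server-side evaluations, so at most $N_o$ outputs per session. The Holder sees only a counter of evaluations, never the inputs. Replacing the real protocol by the functionality costs a negligible simulation gap, and this single hybrid yields properties~3a and~3b directly: 3a from the random-function outputs on unqueried points, 3b from the functionality hiding client inputs, where the only server-observable quantity is the count, i.e. the policy-observable event.

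For \textbf{confidentiality} (property~1), start from the ideal-OPRF hybrid and replace, one index at a time over the at most $N$ unqueried $j$, the key $k_j = F_{\msk_\VP}(x_j)$ with a uniform key. This is justified by 3a and by programming $H_2$ in the ROM. The triple $(\IV_j,y_j,u_j)$ is then an AES-GCM encryption under a fresh uniform key that the adversary never sees. Nonce uniqueness, which holds up to the birthday bound, then lets IND-CPA swap $m_j$ for any equal-length message. Summing the $N$ hybrids gives advantage at most $\mathrm{Adv}_{\mathrm{OPRF}} + N\cdot \mathrm{Adv}_{\mathrm{IND\text{-}CPA}} + \mathrm{negl}$. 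For \textbf{integrity} (property~2), a network adversary does not hold $k_i$ (by property~1 it looks uniform to him), so any accepted modified $(\IV_i,y_i,u_i)$, with $x_i$ bound as associated data, breaks INT-CTXT. Tampering with OPRF messages $a,b$ yields a wrong key, and this is caught either by the tag check or by the final check $H(m_j)=x_j$. Tampering with $x_i$ itself is caught by the Issuer's signature on $C^h_\VC$.

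For \textbf{Issuer verifiability} (property~4), note that a decrypted $m_j = v_j\|t_j$ is accepted only if $H(m_j)=x_j$, and $x_j$ sits inside $C^h_\VC$ signed by the Issuer. Producing an accepted $m'\neq m_j$ would therefore be a second preimage for SHA-3, which is negligible in the ROM by assumption~(3), or a signature forgery. This exactly matches the standard hash-based selective disclosure guarantee.

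The main obstacle is making the OPRF hybrid sound in the presence of an external network adversary while the internal parties are HBC. One must argue that UC composition covers concurrent sessions and that the ideal functionality's ticket or counter mechanism matches the Holder's quota. Otherwise a Verifier, or $\mathcal{A}$ replaying blinded values, could gain an extra evaluation (the one-more-DH concern). The approach to this step is to invoke the session-binding and Holder-initiated assumptions to argue that every $a$ the Holder exponentiates is attributable to one authenticated session. With that attribution, the number of outputs obtainable equals the number of Holder evaluations, and the environment of the UC theorem can run the whole \CODSSI interaction as a wrapper.
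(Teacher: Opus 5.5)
Your proposal is correct and follows essentially the same compositional route as the paper: properties~3a and~3b come directly from the UC security of 2HashDH, confidentiality from treating unqueried claim keys as uniform and then applying IND-CPA, integrity from INT-CTXT, and Issuer verifiability from the hash-based selective-disclosure binding. Your version adds detail that the paper's sketch leaves implicit, namely the index-by-index hybrid, the final check $H(m_j)=x_j$ catching tampered OPRF responses, and the ticket/quota accounting across sessions; these refine the argument rather than change it.
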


\begin{proof}[Proof sketch, see also Appendix~\ref{app:security}]
Let us first recall that the 2HashDH is UC-secure in the ROM~\cite{BeullensDodgsonFallerHesse24}, AES-GCM provides IND-CPA and INT-CTXT security~\cite{mcgrew2004galois,Iwata2012}, and the selective disclosure based on SHA-3 is secure in the ROM~\cite{SHA-3}.
We now proceed deriving the security properties required in Definition~\ref{def:security}.

\emph{Obliviousness of disclosure (property 3).}
The dual guarantee follows from the UC-security of the 2HashDH OPRF.
For part~(a), the OPRF pseudorandomness ensures that the Verifier learns $F_{\msk_\VP}(x_i)$ only for the inputs $x_i$ on which she explicitly queries the OPRF; all other outputs are computationally indistinguishable from independent uniform values.
For part~(b), the OPRF client-privacy guarantee ensures that the Holder, acting as OPRF server, learns nothing about which inputs the Verifier queried.

\emph{Confidentiality and Integrity (properties 1--2).}
By property~3a (client privacy), for every unqueried index $j$ the OPRF output $k_j = F_{\msk_\VP}(x_j)$ is computationally indistinguishable from a uniform random key.
Given this, ciphertexts $y_j$ are indistinguishable from random by the IND-CPA security of AES-GCM (property~1), and any forgery of $(y_j, u_j)$ is detected with overwhelming probability by INT-CTXT tag verification (property~2), even in the presence of an active network adversary.

\emph{Issuer verifiability (property 4)} is directly inherited from the selective disclosure framework: Issuer's signatures remain bound to the claim commitments in $C^h_\VC$, and thus to the encrypted claims $E_\VC$. Any modification to the credential will invalidate the issuer's signature. 

Finally, since AEAD providing both IND-CPA and INT-CTXT properties realises the authenticated encryption functionality under UC composition~\cite{Canetti2022}, our use of AES-GCM inherits UC-style composability.  
Hence, by composition, the overall protocol satisfies Definition~\ref{def:security}.
\end{proof}

\begin{remark}\label{rem:dual-guarantee}
The dual guarantee (properties~3a and~3b of Definition~\ref{def:security}) jointly captures the OT-like privacy structure of our oblivious disclosure protocol.
A fully construction-independent formalisation of these properties as \emph{Verifier-Privacy-Preserving Selective Disclosure} (VPP-SD) via indistinguishability games over an abstract VP scheme syntax is a natural direction for future work, and would allow the analysis of alternative instantiations beyond the OPRF-based design presented here.
\end{remark}

\subsection{Extending the Security to Malicious Actors}
\label{ssec:malicious-internal}

We now discuss the limitations of the construction when internal protocol participants deviate from the honest-but-curious assumption. 
As shown in the previous section, security against an active external adversary holds under standard assumptions on the underlying AEAD, OPRF, and selective disclosure mechanisms. 
We therefore focus on internal parties and show that:
(i) a malicious Issuer has no meaningful impact during the verification phase;
(ii) a malicious Verifier does not gain additional advantage beyond what is already captured by the security definitions; and
(iii) the only non-trivial residual vulnerability arises when the Holder behaves maliciously during \VP creation.

\subsubsection{On the security under malicious Issuer}

The Issuer is not an active participant in the \VP verification protocol, and therefore its behaviour can be modelled in the same way as an external party.
A malicious Issuer cannot interfere with the verification transcript, inject messages, or adaptively influence the selective disclosure process.
Its only possible impact lies in issuing malformed or adversarially generated credentials during \VC creation.

In line with standard selective disclosure frameworks, we assume that Holders (and subsequently verifiers) perform the usual validation steps before relying on any credential:
signature verification on the \VC, consistency checks on the commitments $C_\VC^h$, and structural validation of the credential contents.
Provided these checks succeed, the Issuer has no further role, and its malicious behaviour has no impact on the security of the subsequent oblivious verification phase.

\subsubsection{On the security under malicious Verifier}
A malicious Verifier can deviate arbitrarily during the verification phase, yet this poses no threat beyond what is already captured by the confidentiality and obliviousness guarantees of Definition~\ref{def:security}.
Provided the OPRF construction is secure against malicious clients and the AEAD scheme guarantees IND-CPA and INT-CTXT, the Verifier cannot learn any information about a claim for which she has not obtained the corresponding OPRF-derived key $k_j$.

Formally, a malicious Verifier that fails to derive $k_j$ for some index $j$ has only negligible advantage in distinguishing $(\IV_j, y_j, u_j)$ from an encryption of any other message of equal length, by reduction to OPRF and AEAD security.  
We emphasise however two practical caveats:
(i) OPRF determinism can create linkability across \VPs unless $\msk_\VP$ is chosen fresh (see Section~\ref{sec:codssi-security} regarding key-management assumptions); and
(ii) Holder policy signals and implementation metadata (sizes, timing, error messages) can leak information and must be mitigated (uniform error handling, padding, rate limits).

\subsubsection{On the security under malicious Holder}

A Holder may deviate from the prescribed protocol in two distinct phases: the creation of the \VP and its subsequent verification.
We analyse both cases separately, as the first is the only one posing an actual threat.

\paragraph{\VP verification.}
During the verification protocol, the Holder acts exclusively as the OPRF server (cf.\ line~10 of Algorithm~\ref{alg:verify-vp-oprf} and line~10 of Algorithm~\ref{alg:verify-vp-oprf-batch}).  
Since the underlying 2HashDH OPRF is assumed secure against malicious Servers in the UC framework, a dishonest Holder cannot bias, misdirect, or otherwise influence the values derived by the Verifier beyond what is permitted by the ideal OPRF functionality.
In particular, any deviation by a malicious Holder produces either (i) an output distribution that is indistinguishable from one generated using the ideal OPRF functionality, or (ii) an abort event that the Verifier detects.
Moreover, from the Holder’s perspective, the Verifier’s OPRF input is pseudorandom; hence, no information is leaked through selective failures or adaptive invalid responses.
Consequently, malicious behaviour during verification does not invalidate Definition~\ref{def:security}: confidentiality, authenticity, and obliviousness continue to hold even against a Holder arbitrarily deviating during verification.


\paragraph{VP creation.}
The critical limitation instead concerns the creation of the \VP.
The security theorem requires that $(\VP, D_\VP)$ is generated according to Algorithm~\ref{alg:create-vp}.
This assumption is essential and does not hold under a malicious Holder model: if the Holder manipulates the construction of ciphertexts or the OPRF inputs used for key derivation, the resulting \VP may fail to correspond to the Issuer-certified claims contained in the \VC, actually enabling \textit{selective-failure attacks}, see below.


\subsection{Limitations and Mitigations Under Malicious Holder}
\label{ssec:limitations-mitigations}

As specified in the previous section, the security --and in particular the obliviousness property-- does not hold if the Holder is allowed to tamper with the creation of the \VP.  
More precisely, the current construction does not cryptographically bind together the tuple
\[
    \big(\;v_i\,,\; x_i\,,\; k_i\,,\; (\IV_i, y_i, u_i)\;\big)\,;
\]
therefore, a malicious Holder might, \eg,
(i) encrypt a modified claim value $v'_i \neq v_i$ while retaining the Issuer commitment $x_i$ unchanged or
(ii) derive the encryption key $k_i$ from a manipulated OPRF input $x_i'$ instead of the prescribed $x_i$.
Both these deviations produce AEAD-valid ciphertexts that break the semantic consistency between the \VC and the resulting \VP.
This violates the assumption that the \VP was created honestly and opens space to a potential Verifier privacy breach:
if (i) the specific claim is chosen for decryption, the AEAD check fails and the Verifier will abort the process, revealing its choice; else 
if (ii) the claim was not chosen, the Verifier does not notice the misconduct and the Holder learns the claim was not chosen.
In summary, the Verifier cannot enforce consistency among plaintexts, key material, and ciphertexts at presentation time, and therefore cannot detect dishonest \VP creation prior to key exchange.


\subsubsection{Possible mitigations.}
Several approaches could enforce correct \VP construction, though each introduces additional assumptions or computational overhead and lies outside the current proof-of-concept scope:

\begin{enumerate}
    \item \emph{Issuer-assisted VP generation.}  
    The Issuer participates in or validates the construction of the encrypted claims, e.g.\ by issuing signatures over commitments that bind $(v_i, x_i, \IV_i)$ and approving the OPRF inputs used for key derivation.  
    This restores soundness but requires Issuer involvement at presentation time and may force the Issuer to handle OPRF-related material, which becomes particularly cumbersome in multi-issuer \VPs.

    \item \emph{Trusted VP-generation environment.}  
    A trusted wallet engine or secure enclave enforces correct linkage between Issuer commitments, plaintext claims, key derivation, and AEAD ciphertexts.  
    This avoids Issuer involvement but introduces a trusted-software assumption that only partially aligns with SSI principles.

    \item \emph{Zero-knowledge proofs of correct encryption.}  
    The Holder produces a proof that each ciphertext correctly encrypts the Issuer-committed claim using the key derived from the prescribed OPRF input.
    This achieves full cryptographic binding and soundness without additional trust assumptions, but at significant computational cost.
\end{enumerate}

\section{Experimental Results}
\label{sec:exp}
To assess the practical performance of our framework, we conducted an experimental campaign.  
All benchmarks were executed on a Linux personal laptop equipped with an Intel Core i7-9750H CPU @ 2.60GHz, 16GB RAM. The PoC implementation was developed in JavaScript, ensuring reproducibility in a stable environment, and experiments were run with an idle machine.

Each experiment considered a \VC with $n \in \{2,4,8,\ldots,1024\}$ claims, where each claim was represented by a random hex string of 30 bytes.
For every configuration, we performed $1000$ repetitions of all the protocols, discarding the top and bottom $1\%$ outliers to mitigate fluctuations and obtain statistically meaningful averages.
The evaluation covered three main aspects:
\begin{enumerate}
\item \textbf{Claim level:} hashing, verification, encryption, decryption, and OPRF interaction time per claim.
\item \textbf{Credential level:} creation and verification times, together with the size of \VCs and their metadata.
\item \textbf{Presentation level:} creation and verification time and size of \VPs, the additional overhead of $D_\VP$, and the cost of the oblivious interaction between Verifier and Holder.
\end{enumerate}

\noindent\textbf{Implementation details.} 
The OPRF group $G$ was set to the elliptic curve \texttt{secp256k1} and we used SHA3-512 for implementing $H, H_1,$ and $ H_2$, with suitable group mapping as provided by \texttt{noble} library \cite{noble}.
AES-GCM was used as the AEAD encryption scheme, with random nonce of 1024 bits generated per claim, 256 bits key length, and a 96 bits initialisation vector (\IV).
For decentralised identity management, the implementation adopts the \texttt{did:ethr} \cite{etherDID} method based on a local Ethereum blockchain, while W3C verifiable credentials and presentations are represented in JWT format \cite{jwtvc}. Selective disclosure is built under SHA-3 according to the SD-JWT format~\cite{rfc9901}.

\begin{figure*}[tb]
    \centering
    \includegraphics[width=\dxplot\linewidth]{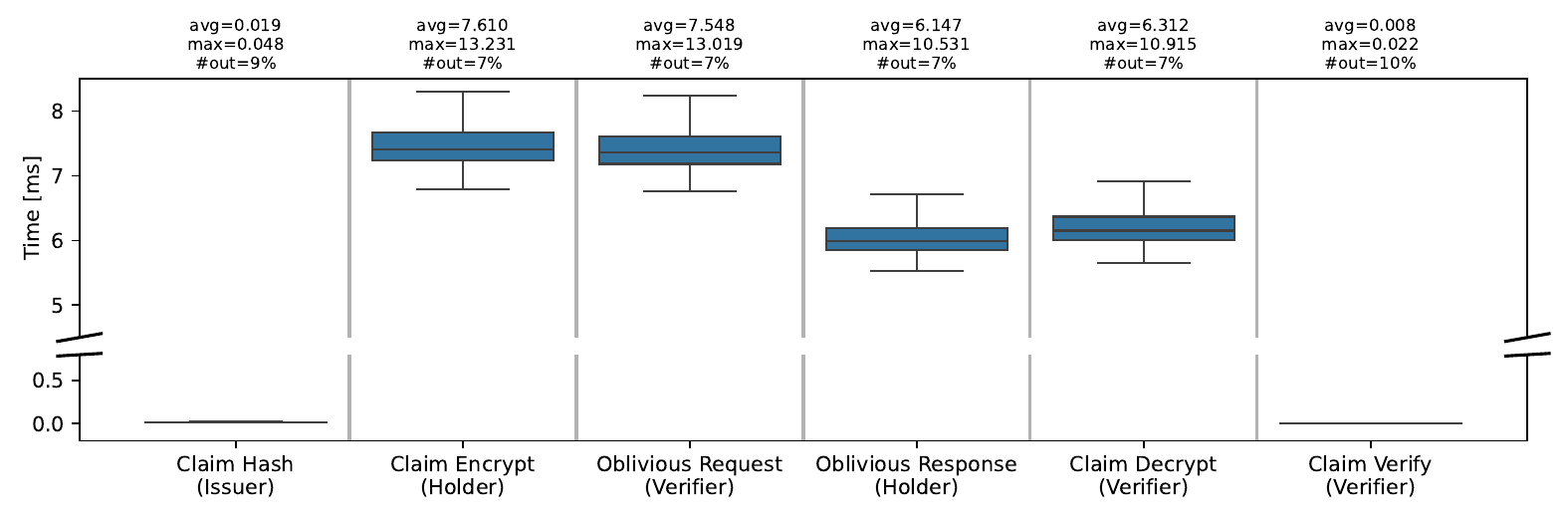}
    \caption{Claim-wise statistics of the computational time for: claim hashing, verifying, encrypt, and decrypt, along with Verifier and Holder computation in the OPRF.}
    \label{fig:perclaim_stats}
\end{figure*}


\subsection{Core claim-level results}

Since they do not depend on the total number of claims $n$, all claim-wise operations are aggregated in Figure~\ref{fig:perclaim_stats}, which also reports their average (\textsf{avg}), maximum (\textsf{max}), and outlier fraction (\textsf{\#out}).
The cost of hashing and verifying a claim is negligible, with average runtimes below $0.02$ ms. Both tasks reduce to computing a SHA-3 hash over the claim value combined with its nonce, hence they remain essentially constant across runs.
By contrast, encryption (cf.\ ll.8--11 from Algorithm~\ref{alg:create-vp}) and decryption (cf.\ ll.11--14 from Algorithm~\ref{alg:verify-vp-oprf}) dominate per-claim processing. On average, encryption performed by the Holder requires $7.6$ ms due to the combination of hashing and elliptic-curve operations needed to derive the symmetric key. Decryption on the Verifier's side is instead slightly faster, averaging $6.3$ ms.
The OPRF evaluation phase shows timings of the same order of magnitude: generating an oblivious request (cf.\ ll.8--9 from Algorithm~\ref{alg:verify-vp-oprf}) takes $7.5$ ms on the Verifier’s side, while producing the corresponding response on the Holder's side (cf.\ l.10 from Algorithm~\ref{alg:verify-vp-oprf}) averages $6.1$ ms.


\subsection{Run-level results}

\begin{figure*}[tb]
    \centering
    \includegraphics[width=\dxplot\linewidth]{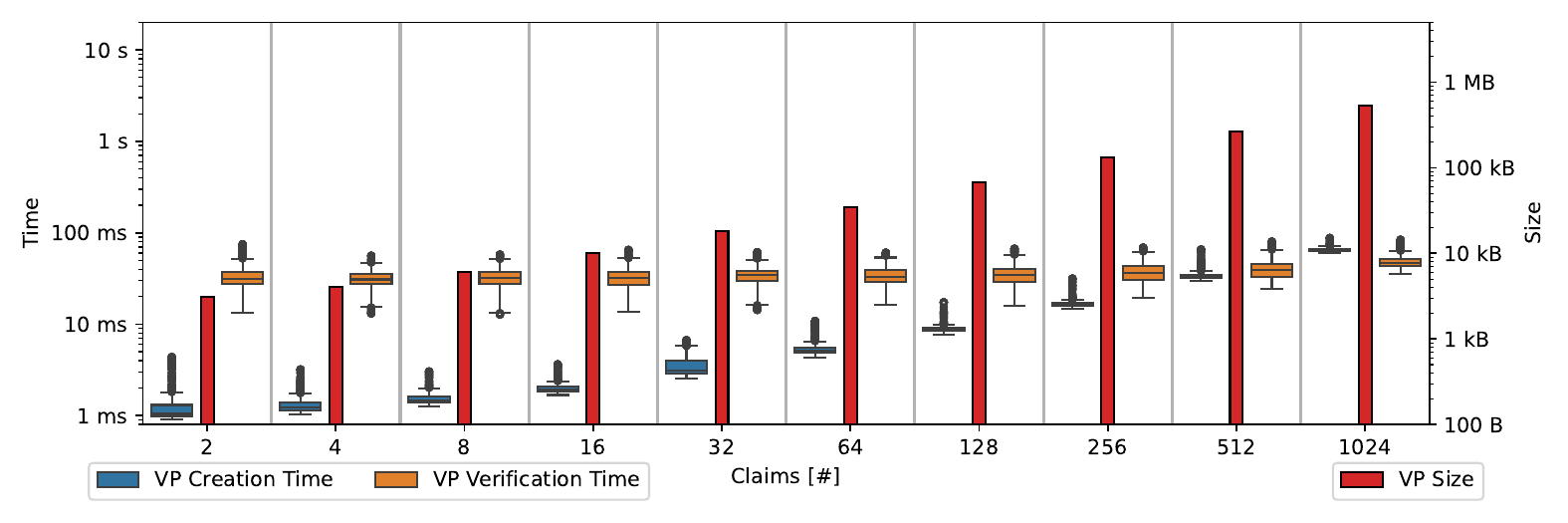}
    \caption{Verifiable presentation (\VP) statistics w.r.t.\ the claim size of their single \VC. Timings (left scale) of creation and verification are reported as boxplot, while \VP size (right scale) is reported as a bar plot since negligible variance is present.}
    \label{fig:VPres}
\end{figure*}

\begin{figure*}[tb]
    \centering
    \includegraphics[width=\dxplot\linewidth]{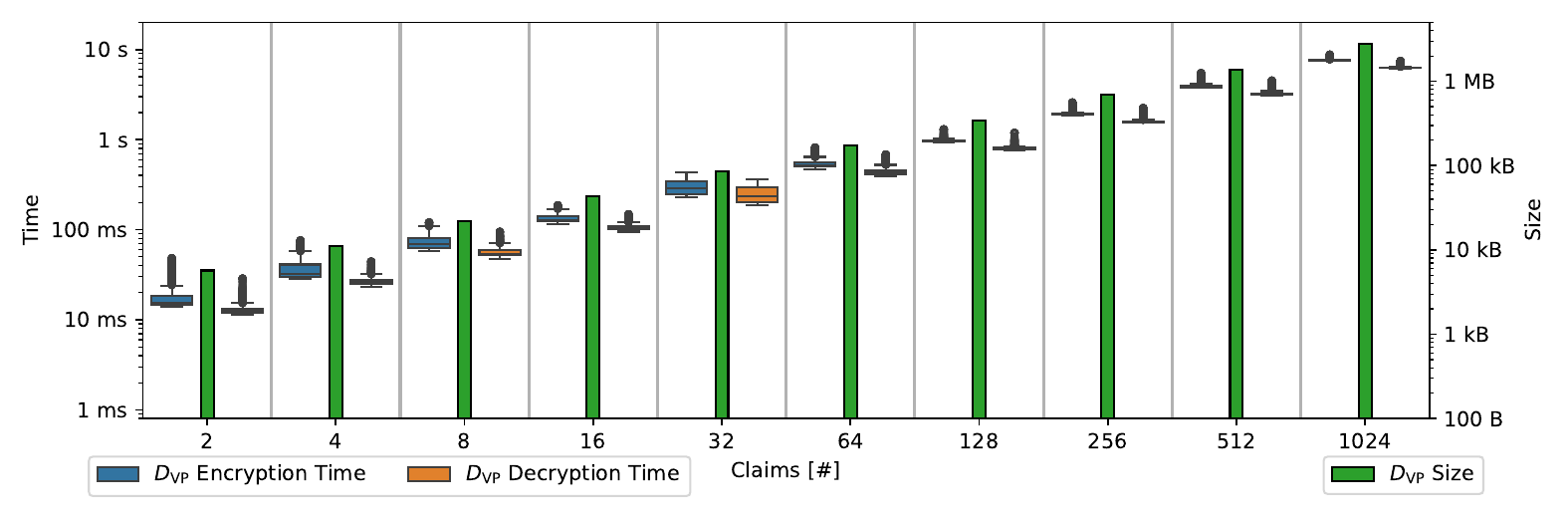}
    \caption{
        Verifiable presentation overhead caused by data $D_\VP$ creation w.r.t.\ the claim size of their single \VC.
        Timings (left scale) of encryption and decryption are reported as boxplot, while $D_\VP$ size (right scale) is reported as a bar plot since negligible variance is present.
    }
    \label{fig:DVPres}
\end{figure*}

A comparative analysis of the performance and size metrics required to present a \VC with varying numbers of claims is presented in Figure~\ref{fig:VPres}, where timings (left scale) of the VP signature creation and verification are reported as boxplot, while sizes
(right scale) of the \VP are reported as bar plots. The \VP creation time measures the time consumed by the Holder to sign the \VP and time increases with the number of claims in the \VC.
Similarly, the \VP verification time also depends on the number of claims in the presented \VC and it is comparable to the \VC verification time because it involves the same operations (\ie, DID resolution and signature verification). In every experimental configuration, the time required for both VP creation and verification is less than 100 ms. Instead, the size of the \VP ranges from 3 kB (for \VC consisting of 2 claims) to 0.53 MB (for \VC consisting of 1024 claims), and it is comparable to the size of the embedded \VC (see Figure~\ref{fig:VC_stats}). It is worth noting that VP signature creation and verification times are common across all selective disclosure approaches, as they correspond to mandatory operations defined by the W3C specification \cite{vcw3c}.

Figure~\ref{fig:DVPres} shows the additional computational overhead introduced for both the Holder and the Verifier during \VP creation and \VP verification when using the proposed approach. In particular, the plot shows the time required by the Holder to generate a symmetric key for each claim and to encrypt the claim value and the nonce with the corresponding symmetric key ($D_{VP}$ Encryption Time), and the time required by the Verifier to derive the symmetric key and decrypt all the claim values ($D_{VP}$ Decryption Time). 
The $D_{VP}$ Encryption/Decryption Time increases with the number of claims, ranging from 10 ms (for \VC consisting of 2 claims) to 10 s (for \VC consisting of 1024 claims), and the $D_{VP}$ Decryption Time is generally lower than $D_{VP}$ Encryption Time. However, the two measures remain broadly comparable in terms of computational cost as both involve elliptic curve, hashing, and encryption/decryption operations.
Here it is important to notice we present the timings for the decryption of all the claims within the \VP, a setting suitable to assess the scaling properties of the benchmark which does not represent the actual usage for selective disclosure.

Figure~\ref{fig:DVPres} shows the overhead required to store the \VP metadata generated during the Encryption by the Holder. The majority of the data generated during the \VP creation process are related to the \VP metadata. Specifically, the size of the \VP metadata is higher than the size of the corresponding \VP as it includes, for each claim, the encrypted claim's value, the corresponding initialisation vector (\IV), and the authentication tag.

\begin{figure*}[tb]
    \centering
    \includegraphics[width=\dxplot\linewidth]{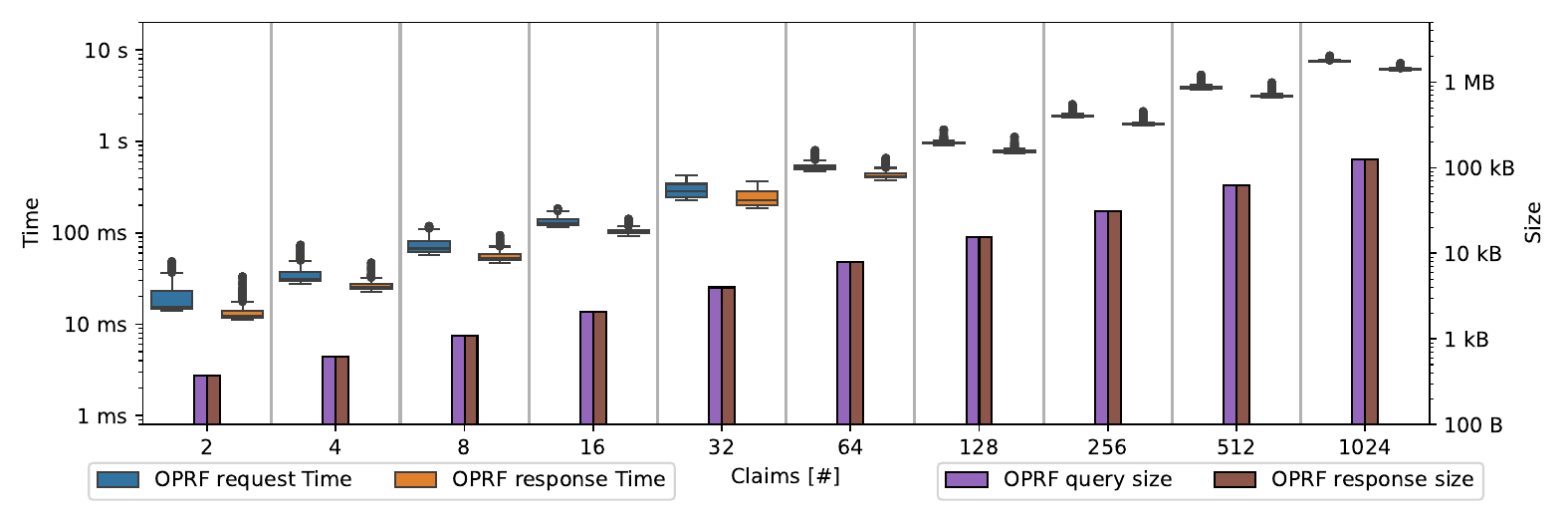}
    \caption{
        Statistics of the OPRF interaction (overhead, cf. Figure~\ref{fig:hash_stats}) w.r.t.\ the claim size of their single \VC.
        Timings (left scale) of Verifier request and Holder response are reported as boxplot, while corresponding message sizes (right scale) are reported as bar plots since negligible variance is present.
    }
    \label{fig:oprf_stats}
\end{figure*}

\begin{figure*}[tb]
    \centering
    \includegraphics[width=\dxplot\linewidth]{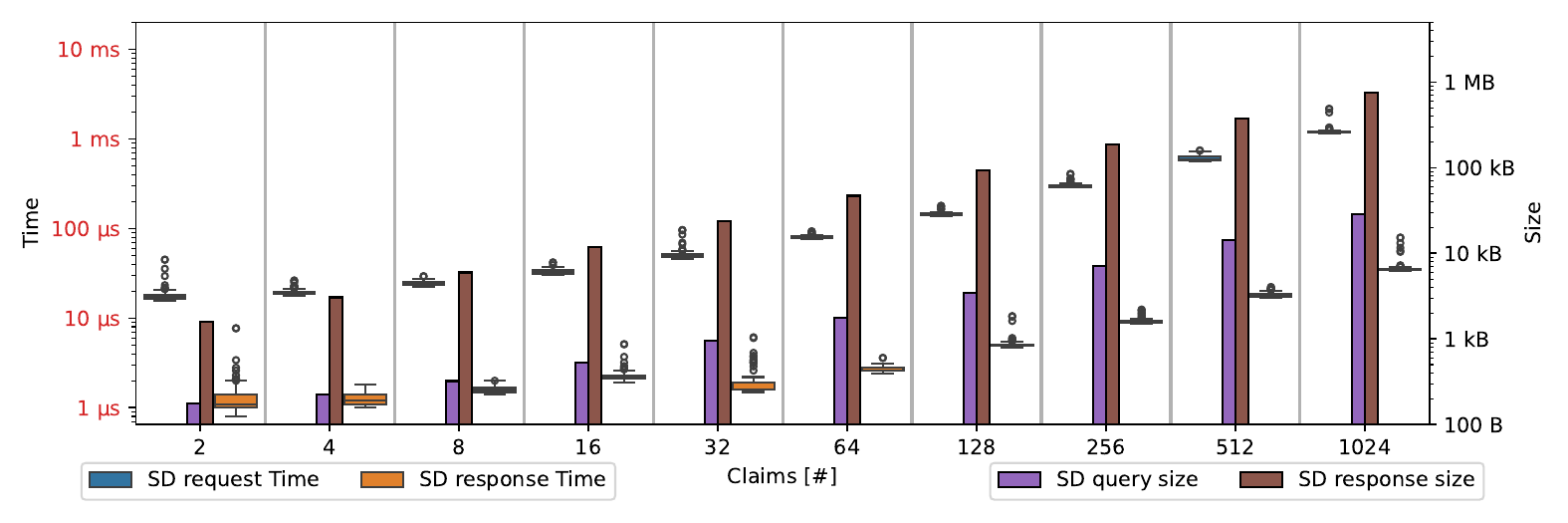}
    \caption{
    Statistics of the hash-based selective disclosure interaction (reference) w.r.t.\ the claim size of their single \VC. 
    Timings (left scale) of Verifier request and Holder response are reported as boxplot, while corresponding message sizes (right scale) are reported as bar plots since negligible variance is present.
    Time scale (left, highlighted in red) is not consistent with other plots for readability.
    }
    \label{fig:hash_stats}
\end{figure*}

After the \VP and its associated metadata are exchanged from the Holder to the Verifier, the proposed \CODSSI selective disclosure approach requires both parties to engage in one or more interactions to obliviously exchange the cryptographic material necessary for revealing the plaintext values of the claims. Figure~\ref{fig:oprf_stats} shows the time required by the Verifier to initiate a request (OPRF request time) and by the Holder to generate the corresponding response (OPRF response time) within the proposed protocol. The proposed experimental setup reflects a worst-case scenario where the Verifier requests the cryptographic material needed to decrypt all claims contained in the \VC.
The execution time for both the Holder and the Verifier increases with the number of claims in the \VC. Specifically, in the proposed approach, the time required by the Verifier to create a request is higher than the time required by the Holder to create the corresponding response. Indeed, the Verifier performs more operations (\ie, hashing and scalar multiplication) compared to the Holder.
Here it is important to notice that the choice of adaptive vs.\ batch modes does not influence the total computational and spatial time, but rather the number of interactions used to query all the keys (and consequently the number of messages that need to be exchanged).

Figure~\ref{fig:oprf_stats} also shows the amount of data that need to be exchanged between the Verifier and the Holder to obliviously request the disclosure of all claims in the \VP (OPRF query size and OPRF response size).
The amount of data generated for processing both the request and response is comparable and increases with the number of claims in the \VC.
Specifically, the request and the response size in the case of \VC consisting of 1024 claims is about 0.12 MB.
We compared these results with those obtained using a classic selective disclosure approach without obliviousness features as defined in \cite{SalveLMR22}.
In particular, Figure~\ref{fig:hash_stats} shows the time required by the Verifier to initiate a selective disclosure request for all the claims in the \VC (SD request time) and by the Holder to generate the corresponding response (SD response time), as well as the amount of data that need to be exchanged in each interaction (SD query size and SD response size).
The time required by the Holder and the Verifier to request the disclosure of claims' values is very low as it involves only exchange of nonce and plaintext claim values.
In contrast, the proposed approach requires elliptic curve operations to protect the privacy of the Verifier, resulting in higher computation overhead.
Similarly, the amount of data generated by the Verifier to initiate the request (SD query size) is about half of that required by the proposed approach (OPRF query size), as it does not include oblivious protection, and the Verifier only needs to specify the claims to be accessed.
Instead, the plot reveals that the amount of data generated by the Holder to create the response (SD response size) is higher than the size required by the proposed approach (OPRF response size).
This difference arises from the fact that the SD response includes the claim values and their associated nonces, while the proposed approach requires exchanging the results of elliptic curve scalar operations, which are then used to derive cryptographic keys and decrypt the claims.

\section{Conclusion}
\label{sec:conclusions}

This paper presented \CODSSI, a solution for preserving Verifier privacy in the process of \SSI \VP disclosure and verification.
To the best of our knowledge, this is the first solution enabling Verifiers to obtain a bounded number of claims while keeping private which subset of claims was actually requested.

By leveraging Oblivious Pseudorandom Functions, \CODSSI enables Holders to prepare Verifiable Presentations containing a large number of claims (say $N$) while allowing Verifiers to select and retrieve at most $N_o \leq N$ claims, without revealing the indices of the disclosed items. This property is achieved without compromising the Holder's ability to enforce the disclosure bound, and without requiring any expensive cryptographic assumptions beyond well-studied OPRF constructions.

\CODSSI applies naturally to several realistic domains, including financial risk scoring and medical data access, where Verifier privacy is crucial.
We formalised the security guarantees of the framework, implemented a prototype, and conducted a comprehensive experimental evaluation.
The results show that our approach is efficient and practical: disclosure times range from a few milliseconds to under two seconds for presentations containing up to 128 claims, and storage requirements remain between a few kilobytes and roughly 1 MB for the same setting, values that do not materially affect user experience in typical \SSI scenarios.

While filling a clear gap in the literature, our solution also opens important lines for further research.
The primary open challenge consists in providing Verifier-privacy guarantees against malicious Holders:
we outlined high-level directions and sketched possible solutions based on zero-knowledge proofs, but developing a fully sound and efficient construction remains an open problem.
Additionally, extending \CODSSI to alternative selective-disclosure paradigms --including \eg encrypted credential schemes-- might reduce the computational burden on the Holder by shifting parts of the \VP preparation process to the Issuer, and constitutes another promising line of investigation.
A further direction is the integration of publicly auditable \VP specifications, where a supervisory authority certifies the admissible claim set and the number of claims required by the Verifier's evaluation circuit, enabling regulatory compliance verification without compromising the Verifier's internal policy.
Finally, formalising Verifier-Privacy-Preserving Selective Disclosure as a construction-independent security notion (cf.\ Remark~\ref{rem:dual-guarantee}), defined via indistinguishability games in the standard PPT model, would strengthen the theoretical foundations and facilitate the analysis of alternative instantiations.

In conclusion, \CODSSI bridges a missing component in the \SSI ecosystem by enabling Verifier-private selective disclosure of claims.
The security of our framework is formally proved, and our prototype and experimental results confirm the feasibility and robustness of the approach, laying the groundwork for more expressive and robust privacy-preserving credential ecosystems.


\section*{Acknowledgements}

\noindent This work was partially supported by project SERICS (PE00000014) under the MUR National Recovery and Resilience Plan funded by the European Union - NextGenerationEU.

\bigskip

\noindent The authors have no competing interests to declare that are relevant to the content of this article.


\bibliographystyle{elsarticle-num}
\bibliography{arXiv_bibliography}


\appendix

\section{Notation}
\label{app:nomenclature}

For the reader's convenience, we provide in Table~\ref{tab:symbols} a list of all symbols adopted within the manuscript.

\begin{table*}[t]
    \centering
    \caption{Table of symbols, divided by primitive}
    \label{tab:symbols}
    \begin{tabular}{ccl}
        \toprule
        \textbf{Variable} & \textbf{Definition} & \textbf{Meaning}\\
        \midrule\midrule
        $\DID_\mathsf{name}$ & \texttt{did:<method>:<id>} & A \DID controlled by $\mathsf{name}$ \\
        $\sk_\mathsf{name}$ & -- & Secret key associated with $\DID_\mathsf{name}$\\
        \VC & $(C_\VC[C_\VC^h], M_\VC, P_\VC)$ & A [selectively disclosable] Verifiable Credential \\
        $M_\VC$ & -- & The set of a \VC metadata\\
        $P_\VC$ & -- & The proof of a \VC (Issuer signature)\\
        $C_\VC$ & $\{c_i : v_i\}$ & The set of claims in a \VC\\
        $c_i$ & \texttt{string} & A claim name / specialised path\\
        $v_i$ & \texttt{string} & A claim value\\
        $C_\VC^h$ & $\{c_i : x_i\}$ & The set of claim digests in a \VC\\
        \midrule
        $D_\VC$ & $C_\VC$ & The Credential Data, containing the plain claims\\
        $x_i$ & $H(v_i \| t_i)$ & The digest of claim $v_i$\\
        $H$ & $\FF_2^\star \to \XXX$ & The selective disclose ROM (SHA-3)\\
        $\XXX$ & -- & Digest space (and Input space of the (O)PRF)\\
        $t_i$ & \texttt{nonce} & The token for the digest evaluation\\
        \midrule
        \VP & $(\VC[\cdot], M_\VP, P_\VP)$ & A Verifiable Presentation \\
        $\VC[\cdot]$ & -- & A list of \VCs (referring to the same \DID)\\
        $M_\VP$ & -- & The set of a \VP metadata\\
        $P_\VP$ & -- & The proof of a \VP (Holder signature)\\ 
        \midrule
        $F$, $F_\msk$ & $\KKK \times \XXX \to \YYY$ & The (O)PRF for KDF (2HashDH), the OPRF with key $\msk$\\
        $\KKK$ & -- & The key space of the (O)PRF\\
        $\YYY$ & -- & The output space of the (O)PRF (and key space for \Enc)\\
        $\msk_\VC$ & $\in \KKK$ & Master secret key of the \VC to be used in $F$\\
        $H_1$ & $\XXX \to G$ & The 2HashDH hash function mapping elements to $G$\\
        $H_2$ & $\XXX \times G \to \YYY$ & The 2HashDH hash function finalising the OPRF output\\
        $a, b, c, r$ & -- & Internal values of 2HashDH scheme\\
        \midrule
        $\Enc_K$ & $(\IV, M, A) \to (C, T)$ & AEAD symmetric encryption (AES-GCM) with key $K$\\
        $\Dec_K$ & $(\IV, C, T, A) \to (M \lor \bot)$ & AEAD symmetric decryption (AES-GCM) with key $K$\\
        $K$ & -- & Key for \Enc, practically $k_i \in \YYY$\\
        $\IV$ & \texttt{nonce} & Random nonce for \Enc, practically $\IV_i$\\
        $M$ & -- & Plaintext for \Enc, practically $v_i$\\
        $A$ & -- & Associated data for \Enc, practically $x_i$ (or None)\\
        $C$ & -- & Ciphertext output of \Enc, practically $y_i$, included in $E_\VC$\\
        $T$ & -- & Authentication tag of $C$, practically $u_i$, included in $E_\VC$\\
        \midrule
        $D_\VP$ & $E_\VC[\cdot]$ & The \VP associated data (encrypted material)\\
        $E_\VC$ & $\{c_i : (\IV_i, y_i, u_i)\}$ & The encrypted claims of a \VC\\
        $(y, u)$ & $=\Enc_{k}(\IV, v\|t, x)$ & The encryption of claim $v$\\
        \midrule
        $\AAA$ & -- & Adversary\\
        $\IIII, \HHHH, \VVVV$ & -- & Issuer, Holder, Verifier\\
        \bottomrule
    \end{tabular}
\end{table*}

\section{Verifiable Credential Statistics}
\label{app:results}
The performance metrics related to the processing of a \VC, which are equal to the classical selective disclosure ones, are presented in Figure \ref{fig:VC_stats}, where timing distributions (boxplots, left axis) are coupled with storage (bar plots, right axis).
Both the average creation and the verification time of the \VC increase with the number of claims. In particular, the computational cost required by the \VC creation grows more rapidly with the number of claims compared to the verification.
This difference is primarily due to the type of operations executed for \VC creation and verification: the \VC creation involves mainly hashing and digital signature operation on each claim, while \VC verification involves DID resolution (see Sec.~\ref{ssec:ssi}) and signature verification.
The size required by both the credential and its associated metadata increases linearly with the number of claims, as shown by the bar plot in Fig.~\ref{fig:VC_stats}. In particular, the plot reveals that the size of the credentials is slightly smaller than the size of its metadata, \eg, a \VC consisting of 1024 claims takes 0.40 MB
while the corresponding \VC metadata takes 0.78 MB. Indeed, the \VC consists of the hashed claims' values, while the \VC metadata includes the claims' values and the nonces. 
\begin{figure*}[tb]
    \centering
    \includegraphics[width=\dxplot\linewidth]{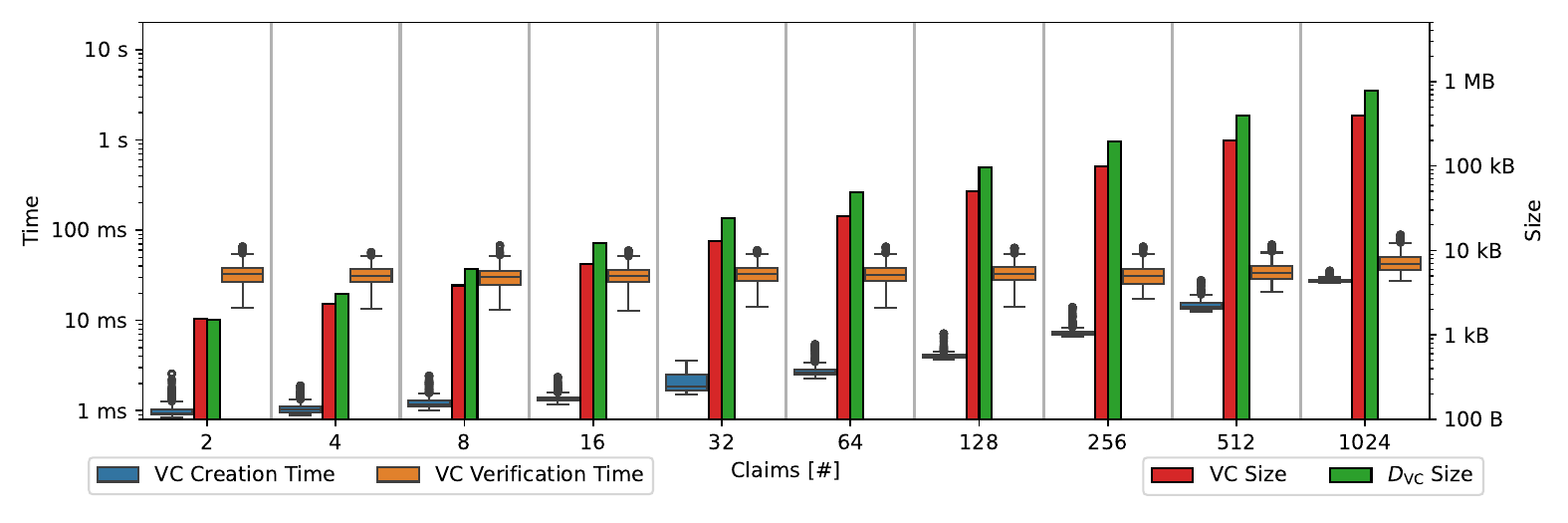}
    \caption{Verifiable credential (\VC) statistics w.r.t.\ their claim size. Timings (left scale) of creation and verification are reported as boxplot, while sizes (right scale) of \VC and $D_\VC$ are reported as bar plots since variance is negligible.}
    \label{fig:VC_stats}
\end{figure*}

\section{Formal security derivation}
\label{app:security}

In this appendix, we present a precise threat model focusing on the Holder--Verifier interaction, which is the locus of the privacy trade-offs introduced by our construction, and we show that \CODSSI guarantees security under the introduced threat model.

\subsection{Threat model and adversarial capabilities.}
\label{ssec:threat-model}
We consider adversaries that may corrupt any of the three roles, but our main focus is on the Holder--Verifier interaction; the Issuer is treated as an honest(-but-curious) party for the purposes of the \VP creation protocol (cf.\ Algorithm~\ref{alg:create-vp}).

\begin{description}
  \item[Malicious Verifier (\(\AAA_\VVVV\)).]  
    The Verifier may act arbitrarily: she can deviate from the OPRF client specification, make adaptive or batched queries to the Holder, attempt to infer non-disclosed claims from protocol transcripts, collude with external parties, or mount replay/chosen-input strategies.  
    The Verifier's main goal consists of learning the values of claims for which she has not obtained the decryption keys.

  \item[Malicious Holder (\(\AAA_\HHHH\)).]  
    The Holder may behave arbitrarily as the OPRF server (\eg, deviate from the protocol, attempt to inject malformed responses, or refuse to respond) yet must behave compliant in the \VP creation phase. The Holder's main goal consists of learning which claim indices the Verifier queried (\ie, breaking client privacy).

  \item[Malicious Issuer (\(\AAA_\IIII\)).]  
    Although the Issuer does not participate in the oblivious protocol, we consider the standard issuance threats: a malicious Issuer could issue malformed or maliciously constructed credentials. We therefore assume that Holders (and later the verifiers) perform the usual signature verification and consistency checks on $\VC$ and $C_\VC^h$ prior to participating in disclosure.

  \item[External malicious adversary.]  
    An active adversary may observe, modify, or reorder messages between Holder and Verifier. We require the OPRF channel to provide standard transport security (or that the protocol itself is implemented to resists active interference); encryption outputs $(\IV_i,y_i,u_i)$ are assumed public but integrity-protected by AEAD and parties are assumed mutual authenticated (\eg under their \DIDs' public key).
\end{description}

\subsection{Security goals and assumptions}
\label{ssec:security-goals}
Under the preceding adversarial model, we recall the following properties required under our security requirements (cf.\ Theorem~\ref{thm:security}):
\begin{enumerate}
  \item \textbf{Confidentiality of non-disclosed claims:} an adversary that does not obtain the correct claim key $k_i$ learns no information on $v_i$ beyond trivial leakage;
  \item \textbf{Integrity and authenticity:} any modification of encrypted credential data is detected and invalidated by the Verifier prior to release of plaintext;
  \item \textbf{Client (Verifier) obliviousness:} a malicious Holder (server) cannot distinguish which claim indices the Verifier queried during the OPRF protocol beyond what is implied by the Holder's policy-triggered actions (\eg, denying service after a quota is reached);
  \item \textbf{Holder policy enforcement (optional):} the Holder may enforce constraints on disclosure (\eg, a quota on the number of keys issued, rate limits, or mandatory interactive checks); such enforcement must not compromise obliviousness more than the policy's observable effects;
  \item \textbf{Unforgeability of disclosures:} a malicious Holder cannot cause the Verifier to accept a plaintext claim that is not derivable from the original credential bindings and the Issuer's signatures.
\end{enumerate}

We also recall the following standard assumptions over our primitives (cf.\ Theorem~\ref{thm:security}):
\begin{itemize}
  \item The OPRF implementation is UC-secure (or at least secure in the ROM) against malicious adversaries (we adopt the classical 2HashDH model from Section~\ref{ssec:oprf}, yet any instantiation with equivalent guarantees is suitable).
  \item The symmetric encryption primitive used to instantiate $\Enc_{k_i}$ is an AEAD scheme providing IND-CPA confidentiality and INT-CTXT authenticity (we adopt AES-GCM from Section~\ref{ssec:aes-gcm}); tags and nonces may be transmitted in the clear alongside ciphertexts.
  \item Credential selective-disclosure primitives (commitments of SHA-3, in our implementation) satisfy the usual unforgeability and privacy properties, \ie the Random Oracle Model (ROM) is assumed.
  \item The Holder correctly enforces any policy decisions (quotas, revocation checks) in a manner that is independent from the specific claim indices except for the observable policy outputs (\eg, ``reject after quota exhausted'').
\end{itemize}


\subsection{Security theorems}
\label{ssec:securityTheorems}

We now formalise and prove the properties listed in Section~\ref{ssec:security-goals} by reduction to the security of the underlying primitives. 
Let $\lambda$ denote the security parameter and let $\negl(\cdot)$ denote negligible functions in $\lambda$.
In what follows, we propose ``proof sketches'' intended to expose the reduction strategy; full reductions follow standard composition arguments.

\begin{theorem}[Claim confidentiality]
\label{thm:confidentiality}
Let $F$ be a UC-secure OPRF (modelled as a PRF in the ROM), $H$ be a secure derivation in the ROM, and $\Enc$ be an IND-CPA secure AEAD scheme.
Consider a verifiable presentation of $N$ claims $\mathsf{pub} = {(x_i, \IV_i, y_i, u_i)}_{i=1}^N$, where $y_i = \Enc_{k_i}( \IV_i,m_i,u_i)$, $k_i = F_{\msk_\VP}(x_i)$, and $x_i = H(m_i)$ and where $m_i$ is here a randomised claim value $m_i = v_i \| t_i$.

Let $\AAA_\VVVV$ be any PPT adversary that obtains the OPRF outputs for at most $q$ indices $Q\subseteq \{1, \dots, N\}$ (\ie it queries the OPRF on $Q$, $|Q|\le q$).  
Then, for every index $j\notin Q$ the ciphertext triple $(\IV_j,y_j,u_j)$ is computationally indistinguishable from an AEAD encryption of any other equal-length message under a fresh, uniform key independent of all other keys.  
Equivalently, the distinguishing advantage of $\AAA_\VVVV$ on any such unqueried index $j$ is negligible; more concretely there exist reductions $\BBB_{\mathsf{AEAD}},\CCC_{\mathsf{OPRF}}, \DDD_{\mathsf{ROM}}$ such that
\[
\mathrm{Adv}^{\mathsf{conf}}_{\AAA_\VVVV}(j)\;\le\;
\mathrm{Adv}^{\mathsf{IND\text{-}CPA}}_{\BBB_{\mathsf{AEAD}}}
\;+\;\mathrm{Adv}^{\mathsf{OPRF}}_{\CCC}
\;+\;\mathrm{Adv}_\DDD^{\mathsf{ROM}}
\;+\;\negl(\lambda).
\]

\end{theorem}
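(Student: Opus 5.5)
We will argue by a standard sequence of hybrid games, fixing an unqueried index $j\notin Q$ and bounding the distance between consecutive games by the advantage of a reduction against one primitive. $\mathsf{G}_0$ is the real experiment: $\AAA_\VVVV$ receives $\mathsf{pub}$ together with the Holder's OPRF responses on its (at most $q$) queries. At the end it must distinguish $(\IV_j,y_j,u_j)$ from an encryption of an equal-length message $m'$.

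The hybrids are as follows.

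\begin{enumerate}
\item \textbf{OPRF hybrid.} In $\mathsf{G}_1$ we replace the real 2HashDH interaction by the ideal OPRF functionality. The reduction $\CCC$ runs $\AAA_\VVVV$, forwards its blinded queries to the OPRF challenger (or simulator), and builds $\mathsf{pub}$ from the keys it obtains. By UC security, $|\Pr[\mathsf{G}_0]-\Pr[\mathsf{G}_1]|\le \mathrm{Adv}^{\mathsf{OPRF}}_{\CCC}$.

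\item \textbf{Key replacement.} In $\mathsf{G}_2$ we replace $k_j=F_{\msk_\VP}(x_j)$ by a fresh uniform key $k_j^\ast$, independent of all other keys. In the ideal functionality, outputs on inputs never submitted by the corrupted client are uniform and independent, so the step is perfect unless $x_j$ is submitted anyway. This can happen because the Verifier sees $x_j$ in the clear and could evaluate the OPRF on it directly. This is the crux of the proof. The hypothesis ``obtains outputs for at most $q$ indices $Q$'' must be read as ``the set of OPRF inputs submitted, as extracted by the UC simulator, is $\{x_i\}_{i\in Q}$'' and not as an a-priori index set. We will state this reading explicitly, and make it enforceable through the Holder's quota, since the simulator's extraction (via the one-more-DH ticketing of 2HashDH) counts the evaluations. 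The remaining failure event is a collision $x_j=x_i$ for some $i\in Q$, or an $H_2$ query on $(x_j,\cdot)$ that hits the correct group element. We bound this event by $\mathrm{Adv}^{\mathsf{ROM}}_{\DDD}$ plus a birthday term $q^2/|\XXX|$. Here $\DDD$ handles the random-oracle programming of $H$, $H_1$ and $H_2$, and the salts $t_i$ make the digests $x_i$ distinct except with negligible probability.

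\item \textbf{AEAD step.} With $k_j^\ast$ uniform and unused elsewhere, $\BBB_{\mathsf{AEAD}}$ embeds the IND-CPA challenge at index $j$. It generates every other ciphertext itself, since it knows $\msk_\VP$ or the simulated keys, and asks the challenger to encrypt either $m_j$ or $m'$ with nonce $\IV_j$ and associated data $x_j$. Fresh random nonces make the query nonce-respecting. Hence $|\Pr[\mathsf{G}_2]-\Pr[\mathsf{G}_3]|\le\mathrm{Adv}^{\mathsf{IND\text{-}CPA}}_{\BBB_{\mathsf{AEAD}}}$, and in $\mathsf{G}_3$ the view is independent of which message was encrypted.
\end{enumerate}

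Summing the hybrid gaps gives the stated bound, with the residual birthday and collision terms absorbed into $\negl(\lambda)$.

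One subtlety must be addressed explicitly. $x_j=H(m_j)$ is public and is also the associated data, so it is itself information about $m_j$. Indistinguishability therefore holds only with $x_j$ fixed in both worlds, i.e.\ the ciphertext triple is indistinguishable \emph{given} $x_j$. In the ROM with high-entropy salt $t_j$, $x_j$ reveals nothing about $v_j$ beyond trivial leakage; this is what the $\DDD_{\mathsf{ROM}}$ term also covers. We also note a typo in the statement: the associated data in $y_i$ should read $x_i$, not $u_i$.
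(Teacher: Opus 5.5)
Your proposal is correct and follows essentially the paper's route. Both use the same three-term decomposition:
\begin{enumerate}
\item the ROM behaviour of $H$, so that the public digest $x_j$ reveals nothing about $m_j$;
\item OPRF pseudorandomness, to replace $k_j$ by a uniform key on the unqueried input;
\item an IND-CPA embedding of the challenge at index $j$.
\end{enumerate}
This gives the same additive bound. Your hybrid presentation makes explicit what the paper's step (ii) leaves implicit: since $x_j$ is public, $j\notin Q$ must mean that $x_j$ is not among the OPRF inputs actually submitted (as extracted by the simulator). You also add the digest-collision term, which is absorbed into the negligible slack.
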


\begin{proof}
Fix an unqueried index \(j\notin Q\). We argue that any non-negligible distinguishing advantage of \(\AAA_\VVVV\) on \((\IV_j,y_j,u_j)\) implies a break of one of the stated primitives.

\textbf{(i)} Because \(x_j=H(m_j)\) and \(m_j\) is kept secret (it is encrypted), the ROM modelling of \(H\) allows us to treat \(x_j\) as a value that is \emph{indistinguishable from random} to any adversary that does not know \(m_j\), except to the extent the adversary can (i) find a preimage \(m'\) with \(H(m')=x_j\) by querying the random oracle, or (ii) otherwise exploit oracle programmability. Any non-negligible advantage due to such manipulations is captured by \(\mathrm{Adv}^{\mathsf{ROM}}_{\DDD}\), the advantage of an adversary \(\DDD\) that violates the intended ROM behaviour (preimage-finding or distinguishing programmed points). Thus, modulo \(\mathrm{Adv}^{\mathsf{ROM}}_{\DDD}\), the input \(x_j\) to the OPRF can be treated as a random (from the adversary's viewpoint) value.

\textbf{(ii)} By the PRF-like (UC) security of the OPRF, if the adversary has not queried \(x_j\) to the OPRF oracle, then the output \(k_j = F_{\msk_\VP}(x_j)\) is computationally indistinguishable from a uniformly random key in the AEAD key space (and different unqueried inputs give independent outputs up to negligible advantage). Combining this with \textbf{(i)} (\ie, treating \(x_j\) as unpredictable except for the ROM term), we conclude that from the adversary's viewpoint \(k_j\) behaves like a fresh uniform key, except with probability bounded by \(\mathrm{Adv}^{\mathsf{OPRF}}_{\CCC}+\mathrm{Adv}^{\mathsf{ROM}}_{\DDD}\).

\textbf{(iii)} Fix any equal-length message \(m^\star\). Under a uniform unknown key \(k_j\), IND-CPA security of the AEAD scheme \(\Enc\) implies that the ciphertexts \(\Enc_{k_j}(m_j, \IV_j)\) and \(\Enc_{k_j}(m^\star;\IV_j)\) are computationally indistinguishable, except with advantage bounded by \(\mathrm{Adv}^{\mathsf{IND\text{-}CPA}}_{\BBB_{\mathsf{AEAD}}} + \mathrm{Adv}^{\mathsf{OPRF}}_{\CCC} + \mathrm{Adv}^{\mathsf{ROM}}_{\DDD}\).

If \(\AAA_\VVVV\) distinguishes \((\IV_j,y_j,u_j)\) from an encryption of some other equal-length message with non-negligible advantage, then one of the following must hold:
\begin{enumerate}
  \item \(\AAA_\VVVV\)'s advantage yields a non-negligible IND-CPA break against \(\Enc\) (construct \(\BBB_{\mathsf{AEAD}}\) that forwards the AEAD challenge as the \(j\)-ciphertext while simulating other indices using a simulated \(\widetilde{\msk}_\VP\)); or
  \item \(\AAA_\VVVV\) has learned non-negligible information about \(k_j\) (despite not querying the OPRF on \(x_j\)), which yields an OPRF break (construct \(\CCC\) that uses \(\AAA_\VVVV\)'s outputs to distinguish OPRF from random); or
  \item \(\AAA_\VVVV\) exploited ROM-specific behaviour (\eg, found a preimage or distinguished programmed points of \(H\)), which yields an attack captured by \(\mathrm{Adv}^{\mathsf{ROM}}_{\DDD}\).
\end{enumerate}
Therefore, the adversary's distinguishing advantage is upper-bounded by the sum of the three primitive advantages plus negligible simulation losses, giving the claimed bound.
\end{proof}

\begin{theorem}[Integrity and authenticity]
\label{thm:integrity}
Assume the AEAD scheme $\Enc$ provides INT-CTXT authenticity. Then any PPT adversary that modifies an encrypted claim tuple $(\IV_i,y_i,u_i)$ and causes the Verifier to accept a forged plaintext (or malformed authentication tag) with non-negligible probability would produce an existential forgery against the AEAD scheme. 
Hence, such forging probability is negligible under the INT-CTXT assumption.
\end{theorem}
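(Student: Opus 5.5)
The plan is a direct reduction to INT-CTXT. Suppose a PPT adversary $\AAA$ (network adversary, or malicious Verifier/Holder acting on the transport) modifies some tuple $(\IV_i,y_i,u_i)$ and with probability $\varepsilon$ makes the Verifier accept. Acceptance here means the decryption step (``check correctness of tag'', then ``decrypt'') returns some $m\neq\bot$ for a tuple $(\IV',y',u')\neq(\IV_i,y_i,u_i)$ under associated data $x_i$. We will build a forger $\BBB$ against the INT-CTXT game of $\Enc$ with advantage close to $\varepsilon$.

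The first step is a hybrid argument. Replace the claim key $k_i=F_{\msk_\VP}(x_i)$ by a uniformly random key that is independent of everything else, and keep the keys of all other indices computed honestly from a simulated master key $\widetilde{\msk}_\VP$. For indices not queried, this switch costs at most $\mathrm{Adv}^{\mathsf{OPRF}}$ plus the ROM term, exactly as in step (ii) of the proof of Theorem~\ref{thm:confidentiality}. For the targeted index, $\BBB$ guesses $i$ uniformly among the $N$ claims, losing a factor $N$. $\BBB$ then embeds the INT-CTXT challenge key at index $i$. It uses its encryption oracle once on $(\IV_i,m_i,x_i)$ to produce the honest tuple, and it simulates all other ciphertexts, the random oracles $H,H_1,H_2$ and the OPRF responses itself. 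When $\AAA$ outputs a modified tuple that the Verifier accepts, $\BBB$ submits $(\IV',y',u')$ with associated data $x_i$ as its forgery. Because the tuple differs from the single oracle answer and decrypts to a value other than $\bot$, it is a valid INT-CTXT win. This gives
\[
\varepsilon \le N\cdot\mathrm{Adv}^{\mathsf{INT\text{-}CTXT}}_{\BBB}+\mathrm{Adv}^{\mathsf{OPRF}}+\mathrm{Adv}^{\mathsf{ROM}}+\negl(\lambda).
\]

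The main obstacle is the index $i$ that the Verifier legitimately queries. There, $k_i$ is handed to the party doing the modifying whenever that party is the Verifier herself, so INT-CTXT cannot apply and a key holder can always re-encrypt. I would handle this by splitting into two cases. (a) The modifier does not know $k_i$: this covers the external adversary, or an index the adversary did not query. The reduction above applies, and the OPRF's UC client privacy lets $\BBB$ answer the honest Verifier's blinded query for $x_i$ by programming $H_2$ at $(x_i,c)$ to the challenge key, with no actual need to know it. (b) The modifier knows $k_i$. Here I would fall back on the final check $H(m)=x_i$ performed by the Verifier. Any accepted plaintext must then be an $H$-preimage of the Issuer-signed commitment, so a ``forged plaintext'' other than $m_i$ yields an $H$-collision, bounded by the ROM term. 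Together with Issuer verifiability, this shows that no plaintext other than the certified $m_i$ is ever accepted. Authenticated, session-bound channels (the operational assumptions) exclude a network adversary replaying OPRF responses to obtain $k_i$, so the two cases are exhaustive. Summing the bounds gives a negligible forging probability.
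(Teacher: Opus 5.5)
Your argument is correct at the level of a sketch. It shares the paper's core move: embed the INT-CTXT challenge key at the attacked index and read an accepted modified tuple as a forgery. It is organised differently, though, and it proves somewhat more.

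The paper's proof is a single black-box step that ``forwards the challenger-supplied ciphertexts as the protocol ciphertexts''. It tacitly treats every $k_i$ as an independent uniform key that the tampering party never sees, and it assumes nothing beyond INT-CTXT. You expose the two facts this hides:
\begin{itemize}
\item $k_i=F_{\msk_\VP}(x_i)$ is only pseudorandom, so a hybrid costing the OPRF and ROM terms (and a factor $N$ for guessing the index) must come before the appeal to INT-CTXT.
\item INT-CTXT is vacuous against anyone holding $k_i$, who can always re-encrypt.
\end{itemize}
For the second case you fall back on the Verifier's check $H(m)=x_i$ against the Issuer-signed commitment. This turns an accepted plaintext $m\neq m_i$ into an $H$-collision, which is essentially the argument the paper keeps separate in Theorem~\ref{thm:unforgeability}.

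The paper's route is shorter and literally matches the stated hypothesis. However, it is valid only for key-oblivious (external) adversaries and silently needs OPRF security. Yours yields a concrete bound and also covers key-holding modifiers. The price is extra assumptions, and in case (b) you conclude an $H$-collision rather than an AEAD forgery, which is the best one can get there.

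Two small repairs.
\begin{itemize}
\item The key-holding modifier that actually matters is the Holder, who knows every $k_i$ through $\msk_\VP$. A Verifier tampering with tuples she herself decrypts is not an attack on an honest Verifier. Your case (b) is phrased for any key holder, so it already covers the Holder.
\item In case (a) you cannot program $H_2$ at $(x_i,c)$ to a challenge key you do not know. Instead, decouple $k_i$ from $H_2$ and simulate the Verifier's decryptions through the INT-CTXT oracles (you know $m_i$). Abort if the adversary ever queries $H_2$ at $(x_i,H_1(x_i)^{\msk_\VP})$; that event is bounded by the OPRF (one-more Diffie--Hellman) term. The property used there is output pseudorandomness towards a transcript observer, not client privacy.
\end{itemize}
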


\begin{proof}
Any adversary that leads the Verifier to accept a tampered ciphertext (\ie, the AEAD verification passes and a different plaintext is produced) can be converted into an adversary that forges a ciphertext under the AEAD challenger. 
Since our protocol describes no particular interactions with the \Enc function, the reduction simply forwards the challenger-supplied ciphertexts as the protocol ciphertexts and uses the adversary's successful acceptance as a forgery. 
Thus, integrity follows directly from AEAD INT-CTXT security.
\end{proof}

\begin{theorem}[Verifier obliviousness]
\label{thm:obliviousness}
Assume the OPRF protocol $F$ is UC-secure (or secure in the ROM with the standard OPRF privacy notion). 
Then, for any PPT adversary $\AAA_\HHHH$ (malicious Holder acting as OPRF server), the Holder's view on the Verifier indices choices is computationally indistinguishable when the Verifier queries any two index sets $I_0, I_1$, except for leakage that is explicitly caused by the query set size.
Formally, the Holder's distinguishing advantage is bounded by the advantage of breaking OPRF client privacy plus negligible terms.
\end{theorem}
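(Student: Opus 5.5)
The plan is a hybrid argument that reduces the Holder's ability to tell apart two equal-size index sets $I_0, I_1$ to the client-privacy (input-hiding) property of the OPRF. First we fix exactly what the Holder $\AAA_\HHHH$ sees. It is compliant during \VP creation, so it already knows $\msk_\VP$ and every $x_i$. During verification its view consists of:
\begin{itemize}
\item the session metadata;
\item the sequence of blinded requests $a_j = H_1(x_{i_j})^{r_j}$ for $i_j \in I_b$, sent either in batch (Algorithm~\ref{alg:verify-vp-oprf-batch}) or adaptively (Algorithm~\ref{alg:verify-vp-oprf});
\item any observable Verifier behaviour after receiving its responses $b_j$, such as aborts or further queries;
\item policy-observable outputs, which depend only on $|I_b|$ by the policy-independence assumption of Section~\ref{ssec:security-goals}.
\end{itemize}
Since $|I_0|=|I_1|$ is assumed, the number of messages, their sizes and the policy events are identically distributed in both worlds. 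It therefore suffices to show that the content of the requests, together with the Verifier's reactions, is indistinguishable.

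We then define hybrids $\mathsf{Hyb}_0,\dots,\mathsf{Hyb}_{q}$, with $q=|I_b|$. In $\mathsf{Hyb}_\ell$, the first $\ell$ queries are made on the inputs indexed by $I_1$ and the remaining ones on the inputs indexed by $I_0$. In the batch setting this is just a reordering of the same multiset argument.

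For 2HashDH the step between consecutive hybrids is essentially information-theoretic for the request itself. Take $r_j$ uniform in $\mathbb{Z}_{|G|}^\ast$ and $G$ of prime order (the algorithms write $r\leftarrow_\$ G$, which we read this way). If $H_1(x)\neq 1$, then $H_1(x)^{r_j}$ is a uniformly random non-identity element, independent of $x$, and $H_1(x)=1$ happens only with negligible probability in the ROM. More generally, we invoke the UC-security of the OPRF (assumption~1 of Theorem~\ref{thm:security}). In the ideal functionality the server sees only an opaque evaluation request. We build a reduction $\CCC$ that embeds the OPRF client-privacy challenge at position $\ell$ and simulates every other query honestly, which is possible because client privacy does not require $\msk_\VP$ to be secret from the server. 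Summing over the hybrids gives an advantage of at most $q\cdot\mathrm{Adv}^{\mathsf{OPRF\text{-}cp}}_{\CCC}+\negl(\lambda)$. Cross-session linkage is ruled out by the fresh-$\msk_\VP$ and session-binding operational assumptions.

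The main obstacle is the feedback channel: the Verifier's observable reactions after it receives possibly malformed responses $b_j$. A malicious server could send $b_j \ne a_j^{\msk_\VP}$ and watch whether the Verifier aborts or changes its next adaptive query. This selective-failure issue is exactly what Section~\ref{ssec:limitations-mitigations} raises for malicious \VP creation.

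We handle it by using the fact that, in this theorem, the Holder is compliant during creation. The only deviation available is in the responses $b_j$, and the Verifier's outcome (accept, or AEAD failure) depends on $b_j$ only through the map $a_j \mapsto b_j$. Because $a_j$ is uniform and independent of the chosen index, the event ``the Verifier's derived key is wrong'' has the same probability whichever index is queried. We make this precise by having the simulator evaluate the Holder's response function on the uniform $a_j$ and decide the abort using only the honest relation $b_j \stackrel{?}{=} a_j^{\msk_\VP}$, which the simulator can check since it knows $\msk_\VP$.

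If this per-response argument turns out too coarse, the fallback is to invoke the malicious-server UC guarantee directly. The ideal functionality makes abort behaviour input-independent, and the residual dependence of later adaptive queries on earlier plaintexts is treated as part of the Verifier's own choice of $I_0$ and $I_1$. The statement quantifies over whatever index sets the Verifier ends up using, so this dependence is absorbed into that quantification.
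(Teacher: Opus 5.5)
Your argument is correct. It reaches the same reduction to OPRF client privacy as the paper, but by a more explicit route and with one ingredient the paper's proof does not contain.

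The paper's proof is a single black-box step. Simulatability yields a simulator that produces the server's view from the number of queries alone, so a Holder separating equal-size $I_0,I_1$ breaks OPRF privacy, and policy events are declared admissible leakage.

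You instead run a hybrid over query positions. The resulting factor $q$ is harmless, and vacuous for 2HashDH, since each blinded request is uniformly distributed independently of the input.

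More substantively, you treat the feedback channel that the paper's proof leaves implicit. Its simulator is stated for an honest client and an honest server. A bare input-hiding notion of client privacy does not by itself show that the Verifier's abort, which is decided by the AEAD check outside the OPRF, is index-independent when the Holder deviates. Your key observation closes this: the unblinded value is correct exactly when $b_j = a_j^{\msk_\VP}$, a predicate of the index-independent pair $(a_j,b_j)$. It also makes visible that compliant \VP creation (every claim key derived under the same $\msk_\VP$ from the prescribed $x_i$) is the hypothesis doing the work. That is precisely the hypothesis whose failure yields the attacks of Section~\ref{ssec:limitations-mitigations}.

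To finish that step, state the two small facts it relies on. First, $b\mapsto b^{1/r_j}$ is a bijection and $H_2$ is collision resistant, so a wrong $b_j$ yields a wrong key. Second, an honestly generated ciphertext is rejected under a key that is an independent random-oracle output. This follows from INT-CTXT even though AES-GCM is not key-committing, because the Holder cannot choose the ciphertext.

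The paper's route buys brevity and independence from the particular OPRF. Yours buys two things:
\begin{itemize}
\item a concrete simulator for the abort decision, available because the reduction runs the compliant creation and knows $\msk_\VP$;
\item a genuine justification for absorbing adaptivity into the choice of $I_0,I_1$: with index-independent aborts and a fixed \VP, the adaptive query sequence is determined in advance by the Verifier's coins.
\end{itemize}
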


\begin{proof}
The proof directly follows from the security of the OPRF, being used as-is in the protocol.
Furthermore, by the UC-security (or standard OPRF privacy) guarantee, interactions between an honest client and an honest server are simulatable: there exists a simulator that produces server-side views indistinguishable from real protocol runs while only learning the number of queries but not their specific indices. 
Thus, any Holder adversary that distinguishes between two different equally-sized sets of the Verifier's queries could be used to break the OPRF privacy property. 
Policy-triggered observable events (\ie, the server denying further queries after $q_{\max}$ responses) are considered allowable leakage and are modelled explicitly; the indistinguishability statement is modulo this leakage.
\end{proof}

\begin{theorem}[Unforgeability of disclosures and Issuer verifiability]
\label{thm:unforgeability}
Assume the credential selective-disclosure primitives are unforgeable. 
Then, a malicious Holder cannot make a Verifier accept a plaintext claim as valid unless that claim is consistent with some legitimately issued credential $\VC$ and the Issuer's signatures.
\end{theorem}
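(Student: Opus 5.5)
The plan is a reduction argument that splits on where a forged acceptance could originate. Fix a malicious Holder $\AAA_\HHHH$ that, with non-negligible probability, makes the Verifier output a plaintext $m_j$ for claim $c_j$ of some presented credential $\VC$ such that $m_j$ is not the value $v_j\|t_j$ committed by the Issuer. The Verifier accepts only if three checks all pass:
\begin{itemize}
    \item the Issuer signature $P_\VC$ on $C^h_\VC\|M_\VC$ validates;
    \item the AEAD tag check on $(\IV_j,y_j,u_j)$ with associated data $x_j = C^h_\VC[c_j]$ succeeds;
    \item the final check $H(m_j)=x_j$ holds.
\end{itemize}
The OPRF key $k_j$ plays no role in soundness. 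The Holder controls $\msk_\VP$ and can make any ciphertext decrypt to anything, so I do not rely on AEAD or OPRF security at all. The whole argument rests on the signature check and the hash check.

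First, signature forgery. Suppose the digest $x_j$ used by the Verifier is not one signed by a legitimate Issuer: either $C^h_\VC$ or $M_\VC$ has been altered, or the credential was never issued. Then the accepted triple $(C^h_\VC, M_\VC, P_\VC)$ is a valid signature on a new message under the Issuer key resolved from its \DID. I build a reduction $\BBB_{\mathsf{sig}}$ that embeds the challenge verification key as the Issuer's key. It answers the Holder's legitimate issuance requests through the signing oracle and outputs the accepted pair as an EUF-CMA forgery. Here I rely on the assumption that the \DID resolution returns the genuine Issuer key.

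Second, hash collision. Suppose instead that $x_j$ is a legitimately signed digest, so $x_j = H(v_j\|t_j)$ for the Issuer's true $(v_j,t_j)$. Acceptance with $m_j \neq v_j\|t_j$ then gives $H(m_j)=H(v_j\|t_j)$, which is a collision in $H$. In the ROM this happens with probability at most $q_H^2/2^{|\XXX|}$, or it reduces to the collision resistance of SHA-3. Together the two cases give
\[
\Pr[\text{forge}] \le \mathrm{Adv}^{\mathsf{EUF\text{-}CMA}}_{\BBB_{\mathsf{sig}}} + \mathrm{Adv}^{\mathsf{CR}}_{\DDD_H} + \negl(\lambda).
\]

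The main obstacle is stating precisely what ``consistent with some legitimately issued credential'' means. It should not rule out the Holder substituting a different, genuinely issued credential, nor the Holder making disclosure fail through bad ciphertexts or wrong OPRF answers. As the discussion of selective-failure attacks already notes, a malicious Holder can cause aborts, and aborting is not a forgery. I will therefore define the forgery event as acceptance of a plaintext $m_j$ that equals no Issuer-committed opening of the digest bound under a valid Issuer signature. I will also point out that binding the claim to this specific Holder depends on the standard subject-\DID check in \VP validation. That check uses the Holder signature $P_\VP$ and is inherited unchanged from the selective-disclosure framework.
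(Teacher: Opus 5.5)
Your proof is correct, but it rests on different checks than the paper's sketch.

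\textbf{The paper's route.} It splits an accepted forgery into two cases:
\begin{itemize}
\item the Holder forged an AEAD ciphertext, contradicting INT-CTXT via Theorem~\ref{thm:integrity}; or
\item the Holder produced a commitment/proof pair that passes Issuer verification without having been issued, contradicting the abstract unforgeability of the selective-disclosure primitive.
\end{itemize}

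\textbf{Your route.} You observe that the AEAD case is empty against this adversary. The Holder generates $\msk_\VP$ and hence knows every $k_j$, and INT-CTXT gives no guarantee against the key owner. Theorem~\ref{thm:integrity} is about an adversary tampering with ciphertexts it did not create, not the party that produced them. Soundness therefore rests entirely on two Verifier checks: the Issuer-signature check on $P_\VC$, and the final check $H(m_j)=x_j$. You then unpack the paper's black-box ``selective-disclosure unforgeability'' into its two concrete ingredients: EUF-CMA of the Issuer signature, and collision resistance of $H$ (or the ROM collision bound).

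\textbf{What each buys.} The paper's argument is shorter and matches the theorem's hypothesis verbatim, but its first case does no work against a key-holding Holder. Yours has several advantages:
\begin{itemize}
\item it identifies which Verifier checks actually carry soundness;
\item it gives an explicit bound;
\item it explains why Holder deviations in the OPRF phase (wrong $b$ responses, malformed ciphertexts) can only cause aborts, never forgeries;
\item it makes explicit assumptions the paper leaves implicit, namely authentic DID resolution of the Issuer key and subject binding through $P_\VP$.
\end{itemize}

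The costs are small. $\BBB_{\mathsf{sig}}$ incurs a guessing loss when several Issuers are present. You also instantiate the hypothesis with specific assumptions rather than quoting it abstractly, which is harmless because, for hash-based disclosure, those two properties are exactly what unforgeability of the selective-disclosure primitive amounts to.
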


\begin{proof}[Proof sketch]
Acceptance of a claim by the Verifier requires two conditions: (i) the decrypted plaintext must pass AEAD verification (guaranteed by Theorem~\ref{thm:integrity}), and (ii) the decrypted claim must be consistent with the Issuer-bound commitments and proofs contained in the \VP (this is the selective-disclosure verification step). 
If a Holder forges a claim that is accepted, then either it has forged an AEAD ciphertext (contradicting AEAD integrity) or it has produced a credential/commitment pair and accompanying proof that passes Issuer verification despite not being issued -- actually contradicting the unforgeability/soundness of the selective-disclosure primitive. 
Both cases reduce to standard forging/soundness breaks, hence are negligible under the assumptions.
\end{proof}

\begin{corollary}[Compositional security]
Under the stated assumptions (OPRF security, AEAD IND-CPA+INT-CTXT, and selective-disclosure unforgeability / soundness in the ROM), the construction satisfies the informal security goals in Section~\ref{ssec:security-goals} (confidentiality of non-disclosed claims, integrity / authenticity, Verifier obliviousness up to policy leakage, Holder policy enforcement, and unforgeability). 
More precisely, for any PPT adversary $\AAA$, there exist reductions that bound the adversary advantage by the sum of the primitive advantages plus negligible terms.
\end{corollary}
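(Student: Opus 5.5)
The Corollary follows by assembling Theorems~\ref{thm:confidentiality}, \ref{thm:integrity}, \ref{thm:obliviousness} and~\ref{thm:unforgeability}. Each of them reduces one goal of Section~\ref{ssec:security-goals} to one primitive, so I will argue goal by goal and then combine.

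\textbf{Step 1: the four reducible goals.}
\begin{enumerate}
  \item Confidentiality of non-disclosed claims (goal~1) is Theorem~\ref{thm:confidentiality}, applied to every unqueried index $j\notin Q$.
  \item Integrity and authenticity (goal~2) is Theorem~\ref{thm:integrity}.
  \item Verifier obliviousness up to policy leakage (goal~3) is Theorem~\ref{thm:obliviousness}.
  \item Unforgeability (goal~5) is Theorem~\ref{thm:unforgeability}.
\end{enumerate}

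\textbf{Step 2: the policy-enforcement goal.} Goal~4 is not a cryptographic property and needs a separate argument. The Holder's quota or rate-limit decision is, by the last assumption in Section~\ref{ssec:security-goals}, a function only of the number of queries and of public session data. Hence the policy transcript can be simulated from $|Q|$ alone. Therefore the simulator from Theorem~\ref{thm:obliviousness}, extended to output this policy transcript, still produces an indistinguishable Holder view. This shows enforcement costs nothing beyond its observable outputs.

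\textbf{Step 3: combining into one bound.} I will run a hybrid argument over the whole experiment against a generic PPT adversary $\AAA$:
\begin{enumerate}
  \item Replace the OPRF by its ideal functionality, paying $\mathrm{Adv}^{\mathsf{OPRF}}$.
  \item Replace the SHA-3 commitments by random-oracle outputs, paying $\mathrm{Adv}^{\mathsf{ROM}}$.
  \item For each of the $N$ indices, replace the AEAD instance under an unqueried key by an ideal one, paying $N\cdot(\mathrm{Adv}^{\mathsf{IND\text{-}CPA}}+\mathrm{Adv}^{\mathsf{INT\text{-}CTXT}})$.
  \item Account for signature and selective-disclosure unforgeability.
\end{enumerate}
A union bound over these hybrids gives $\mathrm{Adv}_\AAA \le \sum(\text{primitive advantages}) + \negl(\lambda)$. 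The factor $N$ is polynomial, so the total remains negligible.

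\textbf{Step 4 (main obstacle): consistency of the hybrids.} The hybrids must be consistent across goals. In particular, the reductions in Theorem~\ref{thm:confidentiality} simulate the other indices with a simulated $\widetilde{\msk}_\VP$. This must not clash with the UC simulator used for obliviousness, and the random oracles $H$, $H_1$, $H_2$ must be programmed coherently in all hybrids.

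I would handle this by fixing the order above: the ideal OPRF hybrid comes first, and all later hybrids work relative to the ideal functionality. A single lazily sampled random-oracle table is then shared by every reduction. With one shared table and the ideal functionality fixed from the start, the per-goal reductions become independent steps of a single hybrid chain.
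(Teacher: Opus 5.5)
Your Step~1 is the paper's entire proof: the paper invokes Theorems~\ref{thm:confidentiality}--\ref{thm:unforgeability} goal by goal and notes that the primitives are composed in a black-box way. Each per-goal bound is then trivially at most the sum of the primitive advantages.

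Beyond Step~1 you do two things the paper does not.

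\textbf{Step~2 is a worthwhile addition.} No theorem in the paper is devoted to goal~4; it is absorbed into the last assumption of Section~\ref{ssec:security-goals} and the ``modulo policy leakage'' clause of Theorem~\ref{thm:obliviousness}. Your simulation-from-$|Q|$ argument makes it explicit. It needs even less than the assumption you cite: an honest Holder's policy decisions are a function of its own view, and that view consists of blinded queries, which are already simulatable from $|Q|$. It covers only the leakage half of goal~4, however. That a quota really bounds what the Verifier learns is the one-more property of the OPRF, which the paper encodes only through the hypothesis $|Q|\le q$ of Theorem~\ref{thm:confidentiality}.

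\textbf{Steps~3--4 are unnecessary.} The goals are separate games with different corrupted parties: a malicious Verifier for confidentiality, the network for integrity, and a malicious Holder for obliviousness and unforgeability. There is therefore no single experiment whose simulators must be mutually consistent. The ``main obstacle'' of Step~4 is an artefact of forcing everything into one hybrid chain.

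If you keep Step~3 anyway, two points need fixing.
\begin{enumerate}
\item Against a malicious Verifier, which indices stay unqueried is decided adaptively, after $D_\VP$ is published. So you cannot fix in advance which ciphertexts to idealise. The reduction must guess the challenge index and abort if it is later queried; this is where a factor $N$ legitimately appears.
\item INT-CTXT only helps against parties who do not hold the key. For integrity it must be applied to the keys the Verifier actually uses, i.e.\ the queried ones, which are hidden from the network adversary, not to the unqueried ones. Against a Holder who knows $\msk_\VP$ it is vacuous, so unforgeability must rest on the check $H(m_j)=x_j$ against the Issuer-signed digest, i.e.\ on collision resistance of $H$ and signature unforgeability.
\end{enumerate}
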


\begin{proof}[Sketch]
Immediate by combining Theorems~\ref{thm:confidentiality}--\ref{thm:unforgeability} and observing that the protocol composes the primitives in a black-box fashion (OPRF for key derivation, AEAD for encryption, and selective-disclosure proofs for Issuer binding).
\end{proof}

\end{document}